\newcommand{\longversion}[1]{#1}
\newcommand{\shortversion}[1]{}
\author{Palash Dey\\ \texttt{palash@tifr.res.in}\\\\ Tata Institute of Fundamental Research - Mumbai, India.}
\titlespacing{\section}{0pt}{1pt}{1pt}
\titlespacing{\subsection}{0pt}{2pt}{1.5pt}
\xpatchcmd{\@endpart}{\vfil\newpage}{}{}{}
\xpatchcmd{\@endpart}{\newpage}{}{}{}
\algrenewcommand\algorithmicrequire{\textbf{Input:}}
\algrenewcommand\algorithmicensure{\textbf{Output:}}
\algnewcommand{\Initialize}[1]{%
  \State \textbf{Initialize:}
  \Statex \hspace*{\algorithmicindent}\parbox[t]{.8\linewidth}{\raggedright #1}
}
\newcommand{\mathleft}{\@fleqntrue\@mathmargin\parindent}
\newcommand{\mathcenter}{\@fleqnfalse}
\newcommand{\el}{\ensuremath{\ell}\xspace}
\newcommand{\suc}{\ensuremath{\succ}\xspace}
\DeclareMathOperator*{\argmin}{arg\!min}
\DeclareMathOperator*{\argmax}{arg\!max}
\renewcommand{\ge}{\geqslant}
\renewcommand{\le}{\leqslant}
\newcommand{\YES}{\textsc{Yes}\xspace}
\newcommand{\NO}{\textsc{No}\xspace}
\newcommand{\pr}{\ensuremath{\prime}\xspace}
\newcommand{\prr}{{\ensuremath{\prime\prime}}\xspace}
\newcommand{\SE}{\textsc{Manipulative Elicitation}\xspace}
\newcommand{\SECPL}{\textsc{Manipulative Elicitation with Candidate Pair Limit}\xspace}
\newcommand{\NP}{\ensuremath{\mathsf{NP}}\xspace}
\newcommand{\coNPH}{\ensuremath{\mathsf{co}}-\ensuremath{\mathsf{NP}}-hard\xspace}
\newcommand{\NPH}{\ensuremath{\mathsf{NP}}-hard\xspace}
\newcommand{\NPC}{\ensuremath{\mathsf{NP}}-complete\xspace}
\newcommand{\XTC}{X3C\xspace}
\newcommand{\SCFT}{\textsc{Set Cover Frequency Two}\xspace}
\newcommand{\VC}{\textsc{Vertex Cover}\xspace}
\newcommand{\NB}{\ensuremath{\mathbb N}\xspace}
\renewcommand{\AA}{\ensuremath{\mathcal A}\xspace}
\newcommand{\CC}{\ensuremath{\mathcal C}\xspace}
\newcommand{\DD}{\ensuremath{\mathcal D}\xspace}
\newcommand{\EE}{\ensuremath{\mathcal E}\xspace}
\newcommand{\FF}{\ensuremath{\mathcal F}\xspace}
\newcommand{\GG}{\ensuremath{\mathcal G}\xspace}
\newcommand{\LL}{\ensuremath{\mathcal L}\xspace}
\newcommand{\NN}{\ensuremath{\mathcal N}\xspace}
\newcommand{\PP}{\ensuremath{\mathcal P}\xspace}
\newcommand{\QQ}{\ensuremath{\mathcal Q}\xspace}
\newcommand{\RR}{\ensuremath{\mathcal R}\xspace}
\renewcommand{\SS}{\ensuremath{\mathcal S}\xspace}
\newcommand{\UU}{\ensuremath{\mathcal U}\xspace}
\newcommand{\VV}{\ensuremath{\mathcal V}\xspace}
\renewcommand{\lll}{\ensuremath{\mathfrak f}\xspace}
\newcommand{\ppp}{\ensuremath{\mathfrak p}\xspace}
\newcommand{\sss}{\ensuremath{\mathfrak s}\xspace}
\newcommand{\nfrac}{\nicefrac}
\newcommand{\eps}{\varepsilon}
\renewcommand{\epsilon}{\eps}
\newtheorem{theorem}{\bf Theorem}
\newtheorem{lemma}{\bf Lemma}
\newtheorem{claim}{\bf Claim}
\newtheorem{corollary}{\bf Corollary}
\newtheorem{definition}{\bf Definition}
\crefname{example}{Example}{Example}
\crefname{theorem}{Theorem}{Theorem}
\crefname{observation}{Observation}{Observation}
\crefname{lemma}{Lemma}{Lemma}
\crefname{corollary}{Corollary}{Corollary}
\crefname{proposition}{Proposition}{Proposition}
\crefname{definition}{Definition}{Definition}
\crefname{claim}{Claim}{Claim}
\crefname{reductionrule}{Reduction rule}{Reduction rule}
\crefname{ineq}{inequality}{Inequality}
\numberwithin{example}{section}
\numberwithin{proposition}{section}
\numberwithin{claim}{section}
\numberwithin{theorem}{section}
\numberwithin{corollary}{section}
\numberwithin{observation}{section}
\numberwithin{lemma}{section}
\numberwithin{definition}{section}
\numberwithin{reductionrule}{section}
\title{{\bf Manipulative Elicitation -- A New Attack on Elections with Incomplete Preferences}}
\author{Palash Dey\\ \texttt{palash@tifr.res.in}\\ Tata Institute of Fundamental Research, Mumbai}
\begin{document}

\maketitle

\begin{abstract}
Lu and Boutilier~\longversion{\cite{LuB11a}}\shortversion{\shortcite{LuB11a}} proposed a novel approach based on ``minimax regret'' to use classical score based voting rules in the setting where preferences can be any partial (instead of complete) orders over the set of alternatives. We show here that such an approach is vulnerable to a new kind of manipulation which was not present in the classical (where preferences are complete orders) world of voting. We call this attack ``manipulative elicitation.'' More specifically, it may be possible to (partially) elicit the preferences of the agents in a way that makes some distinguished alternative win the election who may not be a winner if we elicit every preference completely. More alarmingly, we show that the related computational task is polynomial time solvable for a large class of voting  rules which includes all scoring rules, maximin, Copeland$^\alpha$ for every $\alpha\in[0,1]$, simplified Bucklin voting rules, etc. We then show that introducing a parameter per pair of alternatives which specifies the minimum number of partial preferences where this pair of alternatives must be comparable makes the related computational task of manipulative elicitation \NPC for all common voting rules including a class of scoring rules which includes the plurality, $k$-approval, $k$-veto, veto, and Borda voting rules, maximin, Copeland$^\alpha$ for every $\alpha\in[0,1]$, and simplified Bucklin voting rules. Hence, in this work, we discover a fundamental vulnerability in using minimax regret based approach in partial preferential setting and propose a novel way to tackle it.
\end{abstract}
\section{Introduction}

Aggregating preferences of a set of agents over a set of alternatives is a fundamental problem in voting theory which has been used in many applications in AI for making various decisions. Prominent examples of such applications include collaborative filtering~\cite{PennockHG00}, similarity search~\cite{Fagin}, winner determination in sports competitions~\cite{BetzlerBN14}, etc.~\cite{moulin2016handbook}. In a typical scenario of voting, we have a set of alternatives, a tuple of ``preferences'', called a profile, over the set of alternatives, and a voting rule which chooses a set of alternatives as winners based on the profile. Classically, preferences are often modeled as complete orders over the set of alternatives. However, in typical applications of voting in AI, collaborative filtering for example, the number of alternatives is huge and we have only partial orders over the set of alternatives as preferences.

There have been many attempts to extend the use of voting theory in settings with incomplete preferences. The approach of Konczak and Lang~\longversion{\cite{konczak2005voting}}\shortversion{\shortcite{konczak2005voting}} was to study the possible and necessary winner problems. In these problems, the input is a profile of partial preferences and we want to compute the set of alternatives who wins (under some fixed voting rule) in at least one completion of the profile for the possible winner problem; for the necessary winner problem, we want to compute the set of alternatives who wins in every completion of the profile. There have been substantial research effort in the last decade to better understand these two problems~\cite{lang2007winner,pini2007incompleteness,walsh2007uncertainty,xia2008determining,betzler2009multivariate,chevaleyre2010possible,betzler2010partial,baumeister2011computational,lang2012winner,faliszewski2014complexity,journalsDeyMN16,deypartial,frugaljournal,DeyMN15a,deymfcs2017}. One of the main criticisms of this approach is that the definition of a necessary winner is so strong that none of the alternatives may satisfy it whereas the definition of a possible winner is so relaxed that a large number of alternatives may satisfy it. Moreover, the computational problem of finding the set of possible winners is \NPH for most of the common voting rules (finding the set of necessary winners is also \coNPH for some voting rules, ranked pairs for example)~\cite{xia2008determining}.

Lu and Boutilier~\longversion{\cite{LuB11a}}\shortversion{\shortcite{LuB11a}} took a completely different approach to handle incomplete preferences and proposed a worst case regret based approach for score based voting rules. These voting rules assign some score to every alternative based on the profile and select the alternatives with the maximum (or minimum) score as winners. Many popular voting rules, for example, scoring rules, maximin, Copeland, etc. are score based voting rules. For score based voting rules, intuitively speaking, the worst case regret, called maximum regret in~\cite{LuB11a}, of declaring an alternative $w$ as a winner is the maximum possible difference between the score of $w$ and the score of a winning alternative in any completion of the input partial profile; the winners of a partial profile are the set of alternatives with the minimum maximum (called minimax) regret. A completion of a partial profile is another profile where every preference is complete and it respects the orderings of the corresponding preference in the partial profile. The minimax regret based approach is not only theoretically robust as argued in~\cite{LuB11a} but also practically appealing since computing winners is polynomial time solvable for all commonly used voting rules.

\subsection{Motivation} Although the minimax regret based approach enjoys many exciting features, it introduces a new (which was not present in the classical setting with complete preferences) kind of attack on the election which we call ``manipulative elicitation.'' That is, it may be possible to partially elicit the preferences in such a way that makes some favorable alternative win the election. For example, let us consider a plurality election \EE where an alternative, say $w$, is the top alternative of one preference and another alternative, say $x$, is the top alternative of every other preference. In a plurality election, the winners are the set of alternatives who appear as the top alternative in the largest number of preferences. Hence, $x$ is the unique winner in \EE. Let us now consider a partial profile where, in every partial preference, only $w$ and every other alternative who is preferred less than $w$ in the corresponding preference in \EE are comparable. Let us call the resulting partial profile $\EE^\pr$. If $n$ is the number of preferences, then the minimax regret plurality score of $w$ in $\EE^\pr$ is $(n-1)$ whereas the minimax regret plurality score of every other alternative is $n$ which makes $w$ the unique winner of $\EE^\pr$. We call this phenomenon manipulative elicitation. The problem of manipulative elicitation is even more alarming in AI since, in many applications (collaborative filtering for example), the parts of the preferences that will be elicited can often be influenced and controlled in such settings.

\subsection{Our Contribution} Our main contribution in this paper is the discovery of the manipulative elicitation attack in regret based partial preferential setting. We also show that the corresponding computational problem for manipulative elicitation is polynomial time solvable for every {\em monotone} voting rule which includes all commonly used score based voting rules~[\Cref{thm:all_poly,cor:all_poly}]. Intuitively speaking, we call a score based voting rule monotone if improving the position of some alternative in any (complete) preference can only improve its score; we defer its formal definition till \Cref{sec:prelim}. To counter the negative result of \Cref{thm:all_poly}, we introduce a parameter per pair of alternatives which specifies the minimum number of partial preferences where these two alternatives should be comparable. We establish success of our approach by showing that the new constraints make the corresponding computational task of manipulative elicitation \NPC for a large class of scoring rules [\Cref{thm:secpl_sc}] which includes the plurality [\Cref{thm:secpl_plurality_2app}], veto [\Cref{thm:secpl_veto}], $k$-approval for any $k$, and Borda voting rules [\Cref{cor:kapp_borda}], maximin [\Cref{thm:secpl_maximin}], Copeland$^\alpha$ for every $\alpha\in[0,1]$ [\Cref{thm:secpl_copeland}], and simplified Bucklin [\Cref{thm:secpl_bucklin}] voting rules. We remark that there could be various ways to enforce lower bounds on the number of partial preferences where a particular pair of alternatives is comparable. For example, this can be a feature in the applications which would allow users to generate these bounds from some distribution which would in turn overrule the possibility of such manipulation (due to our hardness results).

\section{Preliminaries and Problem Formulation}\label{sec:prelim}

For a positive integer $k$, we denote the set $\{1, 2, \ldots, k\}$ by $[k]$. Let $\AA=\{a_i: i\in[m]\}$ be a set of $m$ {\em alternatives}. We denote the set of all subsets of \AA of cardinality $2$ by ${\AA\choose 2}$. A complete order over the set \AA of alternatives is called a (complete) {\em preference}. We say that an alternative $a\in\AA$ is placed at the $\el^{th}$ position (from left or from top) in a preference \suc if $|\{b\in\AA: b\suc a\}|=\el-1$. We denote the set of all possible preferences over \AA by $\LL(\AA)$. A tuple $\suc=(\suc_i)_{i\in[n]}\in\LL(\AA)^n$ of $n$ preferences is called a {\em profile}. An {\em election} \EE is a tuple $(\suc,\AA)$ where \suc is a profile over a set \AA of alternatives. If not mentioned otherwise, we denote the number of alternatives and the number of preferences by $m$ and $n$ respectively. A map $r_c:\uplus_{n,|\mathcal{A}|\in\mathbb{N}^+}\mathcal{L(A)}^n\longrightarrow 2^\mathcal{A}\setminus\{\emptyset\}$ is called a \emph{voting rule}. Given an election $\EE$, we can construct from \EE a directed weighted graph $\GG_\EE$ which is called the \textit{weighted majority graph} of \EE. The set of vertices in $\GG_\EE$ is the set of alternatives in $\EE$. For any two alternatives $x$ and $y$, the weight of the edge $(x,y)$ is $\DD_\EE(x,y) = \NN_\EE(x,y) - \NN_\EE(y,x)$, where $\NN_\EE(a, b)$ is the number of preferences where the alternative $a$ is preferred over the alternative $b$ for $a,b\in\AA, a\ne b$. Examples of some common voting rules are as follows.

\begin{itemize}[leftmargin=0cm,itemindent=0.3cm,labelwidth=\itemindent,labelsep=0cm,align=left,noitemsep,topsep=2pt]
 \item \textbf{Positional scoring rules:} An $m$-dimensional vector $\alpha=\left(\alpha_1,\alpha_2,\dots,\alpha_m\right)\in\mathbb{N}^m$ with $\alpha_1\ge\alpha_2\ge\dots\ge\alpha_m$ and $\alpha_1>\alpha_m$ for every $m\in \mathbb{N}$ naturally defines a voting rule --- an alternative gets score $\alpha_i$ from a preference if it is placed at the $i^{th}$ position, and the score of an alternative is the sum of the scores it receives from all the preferences. The winners are the alternatives with the maximum score. Scoring rules remain unchanged if we multiply every $\alpha_i$ by any constant $\lambda>0$ and/or add any constant $\mu$. Hence, we can assume without loss of generality that for any score vector $\alpha$, we have $gcd((\alpha_i)_{i\in[m]})=1$ and there exists a $j< m$ such that $\alpha_\el = 0$ for all $\el>j$. We call such an $\alpha$ a {\em normalized} score vector. 
%  A scoring rule is called {\em strict} if $\alpha_i>\alpha_{i+1}$ for every $i\in[m-1]$. 
 If $\alpha_i$ is $1$ for $i\in [k]$ and $0$ otherwise, then, we get the {\em $k$-approval} voting rule. The {\em $k$-approval} voting rule is also called the {\em $(m-k)$-veto} voting rule. The $1$-approval voting rule is called the {\em plurality} voting rule and the $1$-veto voting rule is called the {\em veto} voting rule. If $\alpha_i=m-i$ for every $i\in[m]$, then we get the {\em Borda} voting rule.
 
 \item \textbf{Maximin:} The maximin score of an alternative $x$ is $\min_{y\ne x} \NN_\EE(x,y)$. The winners are the alternatives with the maximum maximin score.
 
 \item \textbf{Copeland$^{\alpha}$:} Given $\alpha\in[0,1]$, the Copeland$^{\alpha}$ score of an alternative $x$ is $|\{y\ne x:\DD_\EE(x,y)>0\}|+\alpha|\{y\ne x:\DD_\EE(x,y)=0\}|$. The winners are the alternatives with the maximum Copeland$^{\alpha}$ score. %If not mentioned otherwise, we will assume $\alpha$ to be zero.
 
 \item \textbf{Simplified Bucklin:} An alternative $x$'s simplified Bucklin score is the minimum number $\ell$ such that $x$ is placed within the top $\ell$ positions in more than half of the preferences. The winners are the alternatives with the lowest simplified Bucklin score.
\end{itemize}

We call a voting rule {\em ``score based''} if the voting rule assigns some score to every alternative based on the profile and chooses either the set of alternatives with the maximum score or the set of alternatives with the minimum score as winners. All the above mentioned voting rules are score based\longversion{ --- positional scoring rules, maximin, and Copeland$^\alpha$ for every $\alpha\in[0,1]$ select the set of alternatives with the maximum score as winners and the simplified Bucklin voting rule selects the set of alternatives with the minimum score as winners}. \longversion{We now define few properties of a score based voting rule which will be relevant to us.}
%We say that a score based voting rule $s$ satisfies the {\em \iia} property if, for every positive integer $n$, every two profiles $(\suc_i)_{i\in[n]}$ and $(\suc_i^\pr)_{i\in[n]}$ over any finite set \AA of alternatives, and every alternative $x\in\AA$ such that $\{y\in\AA: x\suc_i y\} = \{y\in\AA: x\suc_i^\pr y\}$ for every $i\in[n]$, we have $s(x, (\suc_i)_{i\in[n]}) = s(x, (\suc_i^\pr)_{i\in[n]})$ where $s(x, (\suc_i)_{i\in[n]})$ and $s(x, (\suc_i^\pr)_{i\in[n]})$ denote the score that $x$ receives under $s$ in profiles $(\suc_i)_{i\in[n]}$ and $(\suc_i^\pr)_{i\in[n]}$ respectively. 
We say that a score based voting rule $s$ is {\em monotone} if, for every positive integer $n$, every two profiles $(\suc_i)_{i\in[n]}$ and $(\suc_i^\pr)_{i\in[n]}$ over any finite set \AA of alternatives, and every alternative $x\in\AA$ such that $\{y\in\AA: x\suc_i y\} \subseteq \{y\in\AA: x\suc_i^\pr y\}$ for every $i\in[n]$, we have $s(x, (\suc_i)_{i\in[n]}) \le s(x, (\suc_i^\pr)_{i\in[n]})$. 
% We observe that for a monotone score based voting rule $s$, we have $s(x, (\suc_i)_{i\in[n]}) = s(x, (\suc_i^\pr)_{i\in[n]})$ for every two profiles $(\suc_i)_{i\in[n]}$ and $(\suc_i^\pr)_{i\in[n]}$ over any finite set \AA of alternatives and every alternative $x\in\AA$ such that $\{y\in\AA: x\suc_i y\} = \{y\in\AA: x\suc_i^\pr y\}$ for every $i\in[n]$. 
We call a voting rule $r$ {\em neutral} if, for every positive integer $n$, every profile $(\suc_i)_{i\in[n]}$ over any finite set $\AA = \{x_i: i\in[m]\}$ of $m$ alternatives, and every permutation $\sigma$ of $[m]$, we have $\sigma(r((\suc_i)_{i\in[n]})) = r((\sigma(\suc_i))_{i\in[n]})$ where \longversion{$\sigma(r((\suc_i)_{i\in[n]}))$ is the image of $r((\suc_i)_{i\in[n]})$ under $\sigma$ and }$\sigma(\suc_i) = x_{\sigma(1)}\suc x_{\sigma(2)}\suc \cdots\suc x_{\sigma(m)}$ if $\suc_i = x_1 \suc x_2 \suc \cdots \suc x_m$. We call a voting rule {\em worst efficient} if the worst possible score with $n$ preferences over $m$ alternatives can be computed in a polynomial (in $m$ and $n$) amount of time. We observe that all the voting rules mentioned above are neutral, worst efficient, and monotone if, for the case of simplified Bucklin voting rule, we replace the simplified Bucklin score with negative of that and choose the alternative with the maximum score.

\subsection{Incomplete Election and Minimax Regret Extension of Score Based Voting Rules}

Although preferences are classically modeled as complete orders, in many scenarios, preferences can be any partial order over \AA. We often denote a partial order \RR by the set $\{(a,b): a,b\in\AA, a\RR b\}$. Given a profile \PP of partial preferences (which we call a partial profile), we denote the set of all completions of \PP to complete orders by $\CC(\PP)$. Lu and Boutilier~\longversion{\cite{LuB11a}}\shortversion{\shortcite{LuB11a}} proposed a novel approach to extend the use of score based voting rules for settings with partial profiles based on a notion of regret. Let $s$ be a score based voting rule so that the winner is an alternative with the maximum score. Positional scoring rules, maximin, Copeland$^\alpha$ for every $\alpha\in[0,1]$, etc. are prominent examples of such score based voting rules. Let us denote the score that a score based voting rule $s$ assigns to an alternative $a\in\AA$ in a profile $\suc\in\LL(\AA)^n$ by $s(a,\suc)$. We denote the minimax regret voting rule based on a voting rule $s$ by $\sss$. For a profile \suc, let $s(\suc)=\argmax_{a\in\AA} \{ s(a, \suc)\}$. Given a partial profile \PP and a score based rule $s$, $\sss(\PP)$ is defined as follows.
\begin{align*}
s-Regret(a, \suc) &= |s(s(\suc),\suc) - s(a, \suc)|\\
% s-PMR(a, a^\pr, \PP) &= \max_{\suc\in\CC(\PP)} \{s(a^\pr, \suc) - s(a, \suc)\}\\
s-MR(a, \PP) &= \max_{\suc\in\CC(\PP)} s-Regret(a, \suc)\\
% &= \max_{a^\pr} s-PMR(a, a^\pr, \PP)\\
\sss(\PP) &= \argmin_{a\in\AA} s-MR(a, \PP)
% \sss(\PP) &= \{a\in\AA: s-MR(a, \PP) \le s-MR(b, \PP) \forall b\in\AA\}
\end{align*}

\longversion{We remark that Lu and Boutilier~\cite{LuB11a} defined $s-Regret(a, \suc)$ without the absolute operator on the right; we choose to do so to take care of the simplified Bucklin voting rule.} For a partial profile \PP and a minimax regret (MR for short) voting rule \sss, we say that an alternative $a\in\AA$ co-wins if $a\in\sss(\PP)$ and wins uniquely if $\sss(\PP)=\{a\}$. 
% For a score based voting rule $s$ where the winner is the alternative with the minimum score (instead of maximum), we extend $s$ to the minimax regret voting rule $\sss$ by replacing max with min and min with max in the definitions of $s$-Regret, $s$-MR, and $\sss(\PP)$ above. Simplified Bucklin is an important example of such rule. 
For an alternative $a\in\AA$, if $s-MR(a, \PP) = s-Regret(a, \suc)$ for some $\suc\in\CC(\PP)$, then we call an alternative in $s(\suc)$ a {\em competing alternative} of $a$ in \PP.

We now formally define manipulative elicitation and the basic computational problem of manipulative elicitation for a score based voting rule $s$.

\begin{definition}[\sss-manipulative elicitation]
 For a profile \suc over a set \AA of alternatives, we say that a partial profile \PP is called a manipulative elicitation if $\suc\in\CC(\PP)$ and $\sss(\PP)=\{c\}$.
\end{definition}

\begin{definition}[\sss-\SE]\label{def:naive}
 Given a set \AA of alternatives, a profile $\suc\in\LL(\AA)^n$ of $n$ preferences, and an alternative $c\in\AA$, compute if there exists a partial profile \PP such that $\suc\in\CC(\PP)$ and $\sss(\PP)=\{c\}$?
\end{definition}

We will see in \Cref{thm:all_poly} that the \sss-\SE problem is polynomial time solvable for every neutral, monotone, and worst efficient score based voting rule. This shows that all the commonly used voting rule considered here are vulnerable under manipulative elicitation. In the hope to counter this drawback, we extend the basic problem in \Cref{def:naive} to \SECPL in \Cref{def:secpl}. We will indeed see in \Cref{sec:results_hard} that the \SECPL problem is \NPC for all the voting rules that we consider in this paper. For a partial profile $\suc=(\suc_i)_{i\in[n]}$ and $\{a,b\}\in{\AA\choose 2}$, we denote the number of partial preferences in \suc where $a$ and $b$ are comparable by $\ppp_{\{a,b\}}(\suc)$.

% \begin{definition}{\bf (\sss-\SECL)}
%  Given a set of alternatives \AA, a voting profile $\suc\in\LL(\AA)^n$ of $n$ voters, two maps $\lll: \AA \longrightarrow \NB$ such that $0\le\lll(a)\le n$ for every $a\in\AA$, and an alternative $x\in\AA$, does there exist a partial voting profile \PP such that $\suc\in\CC(\PP),\lll(a)\le \ppp_a(\suc)$ for every $a\in\AA$?
% \end{definition}

\begin{definition}[\sss-\SECPL]\label{def:secpl}
 Given a set \AA of alternatives, a profile $\suc\in\LL(\AA)^n$ of $n$ voters, a function $\lll: {\AA\choose 2} \longrightarrow \NB$ such that $0\le\lll(\{a,b\})\le n$ for every $\{a,b\}\in{\AA\choose 2}$, and an alternative $x\in\AA$, compute if there exists a partial profile \PP such that $\suc\in\CC(\PP),\lll(\{a,b\})\le \ppp_{\{a,b\}}(\suc)$ for every $\{a,b\}\in{\AA\choose 2}$ and $\sss(\PP)=\{x\}$?
\end{definition}

% \begin{definition}{\bf (\sss-\SEVL)}
%  Given a set of alternatives \AA, a voting profile $\suc\in\LL(\AA)^n$ of a set \VV of $n$ voters, two maps $\lll,\uuu: \VV \longrightarrow \NB$ such that $0\le\lll(v)\le\uuu(v)\le {m\choose 2}$ for every $v\in\VV$, and an alternative $x\in\AA$, does there exist a partial voting profile \PP such that $\suc\in\CC(\PP), \lll(v)\le \ppp_v(\suc)\le\uuu(v)$ for every $v\in\VV$, and $\sss(\PP)=\{x\}$ for the unique winner case and $x\in\sss(\PP)$ for the co-winner case?
% \end{definition}

We remark that both the computational problems in \Cref{def:naive,def:secpl} have been defined for the unique winner case; we could as well define these problems for the co-winner case also. It turns out that all our proofs (except \Cref{thm:all_poly}) can be easily modified for the co-winner counterpart and our choice for defining these problems in the unique winner setting is only a matter of exposition.
\section{Polynomial Time Algorithm for \SE}

Our first result is \Cref{thm:all_poly} which shows that the \SE problem is polynomial time solvable for a large class of voting rules.
\begin{theorem}\label{thm:all_poly}
 The \SE problem is polynomial time solvable for every monotone, neutral, and worst efficient score based voting rule $s$.
\end{theorem}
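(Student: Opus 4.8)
The plan is to reduce the search over all partial profiles to a single canonical one and to evaluate that one profile using the pairwise decomposition of minimax regret together with worst efficiency and neutrality. (I treat simplified Bucklin as a maximum-score rule after negating its score, as the paper does, so that one uniform argument covers all the rules.)

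First I would record two facts. Write $s\text{-}PMR(a,b,\PP)=\max_{\suc^\pr\in\CC(\PP)}[s(b,\suc^\pr)-s(a,\suc^\pr)]$ for the pairwise max regret; then $s\text{-}MR(a,\PP)=\max_{b\in\AA}s\text{-}PMR(a,b,\PP)$, and because $s$ is score based and worst efficient, each $s\text{-}PMR(a,b,\PP)$ — a most adversarial completion that raises $b$ and lowers $a$ — is computable in polynomial time. Next, let $\rho$ denote the largest regret any alternative can ever incur over $n$ preferences, i.e. $\rho=\overline{\sigma}-\underline{\sigma}$, where $\overline{\sigma}$ (resp. $\underline{\sigma}$) is the score obtained by placing an alternative at the top (resp. bottom) of every preference. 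By neutrality $\overline{\sigma}$ and $\underline{\sigma}$ do not depend on the alternative, and by worst efficiency they are computable in polynomial time; the value $\rho$ is attained with the empty partial profile (push $a$ to the bottom everywhere and some competitor to the top everywhere) and is a universal upper bound, $s\text{-}MR(a,\PP)\le\rho$ for every alternative $a$ and every partial profile \PP.

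The canonical object is the profile $\PP^\ast$ obtained from \suc by revealing in each preference exactly the down-edges of $c$: the $i$-th partial preference is $\{(c,y):c\suc_i y\}$, so trivially $\suc\in\CC(\PP^\ast)$. The key lemma is that $\PP^\ast$ drives every competitor to the universal maximum while leaving $c$ below it: for every $a\ne c$ we have $s\text{-}MR(a,\PP^\ast)=\rho$, and $s\text{-}MR(c,\PP^\ast)\le\rho$. For the first claim I would exhibit the completion placing $c$ at the top and $a$ at the bottom of every preference; this respects $\PP^\ast$, whose only constraints force $c$ above certain alternatives and impose nothing on $a$, so it is consistent with $c$ topmost and $a$ bottommost. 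There $s(c,\cdot)=\overline{\sigma}$ and $s(a,\cdot)=\underline{\sigma}$ by monotonicity and neutrality, giving $s\text{-}PMR(a,c,\PP^\ast)\ge\rho$ and hence $s\text{-}MR(a,\PP^\ast)=\rho$. Thus $c$ is always a co-winner under $\PP^\ast$, and it is the \emph{unique} winner iff $s\text{-}MR(c,\PP^\ast)<\rho$; since $s\text{-}MR(c,\PP^\ast)=\max_b s\text{-}PMR(c,b,\PP^\ast)$ is computable in polynomial time (with $c$ pinned no lower than its position in \suc), this is a polynomial-time test, and it is sound, because whenever it passes $\PP^\ast$ itself is a witnessing manipulative elicitation.

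The remaining and main obstacle is completeness: I must show that if \emph{some} valid $\PP^\pr$ makes $c$ the unique winner, then already $s\text{-}MR(c,\PP^\ast)<\rho$. The easy half is that every competitor satisfies $s\text{-}MR(a,\PP^\pr)\le\rho$, so $s\text{-}MR(c,\PP^\pr)<\rho$. The delicate point is that $\PP^\pr$ may reveal relations \emph{among alternatives other than} $c$ in order to ``cap'' the competitors of $c$ and thereby depress $s\text{-}MR(c,\cdot)$ below the value $\PP^\ast$ achieves; a priori this might let $c$ win even when $\PP^\ast$ fails. I expect to rule this out with an exchange argument: revealing an edge $z\suc_i b$ (with $z,b\ne c$) that lowers the boost of a competitor $b$ against $c$ simultaneously forbids $z$ from sinking below $b$, lowering $z$'s own attainable suppression by at least as much, so that $s\text{-}MR(c,\PP^\pr)\ge\min_{a\ne c}s\text{-}MR(a,\PP^\pr)$ whenever $s\text{-}MR(c,\PP^\ast)=\rho$, contradicting uniqueness. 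Proving this inequality uniformly for all monotone, neutral, worst efficient score based rules — rather than rule by rule — is the crux, and I would attack it through the per-preference structure of $s\text{-}PMR$ made accessible by worst efficiency.
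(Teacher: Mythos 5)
Your canonical profile $\PP^\ast$ and your treatment of the competitors coincide with the paper's: the completion that puts $c$ on top and $y$ at the bottom of every preference is exactly the paper's $\QQ_y$, and it correctly gives $s-MR(y,\PP^\ast)=\rho$ for every $y\ne c$, so the soundness half of your test is fine. The first genuine gap is the claimed polynomial-time implementation. ``Worst efficient'' only guarantees that the single number $\underline{\sigma}$ (the paper's $\beta_n$) is computable; it gives you no algorithm for $s\text{-}PMR(a,b,\PP)=\max_{\suc^\pr\in\CC(\PP)}\left[s(b,\suc^\pr)-s(a,\suc^\pr)\right]$, which is an optimization over exponentially many completions and is not known to be tractable for an abstract monotone, neutral, worst efficient score based rule. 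The paper never evaluates any regret on a partial profile: its test is simply whether $s(c,\suc)=\beta_n$ in the \emph{given complete} profile, and its YES direction closes by noting that monotonicity pins $\min_{\RR\in\CC(\PP^\ast)}s(c,\RR)=s(c,\suc)$, whence $s-MR(c,\PP^\ast)\le\alpha-s(c,\suc)<\rho$ whenever $s(c,\suc)>\beta_n$.

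The second and larger gap is the one you flag yourself: the completeness direction is not proved, and ``I expect to rule this out with an exchange argument'' is the entire content of the step on which the theorem stands or falls. The paper's version of this step is \Cref{clm:no} --- if $s(c,\suc)=\beta_n$ then no partial profile admitting $\suc$ as a completion makes $c$ the unique winner --- proved by pushing $c$ down inside the regret-maximizing completions while remaining in $\CC(\RR)$ (\Cref{clm:beta,clm:prr}). Be aware that this step is genuinely delicate and that your test is \emph{not} equivalent to the paper's: for plurality with $\suc_1=y\suc c\suc z$ and $\suc_2=z\suc c\suc y$ one has $s(c,\suc)=\beta_2=0$, yet $s-MR(c,\PP^\ast)=1<2=\rho$ while both competitors have regret $2$, so your test accepts this instance while the paper's rejects it; the ``push $c$ down'' profile of \Cref{clm:prr} must simultaneously lie in $\CC(\RR)$ and be a linear order, and on this instance it cannot. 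So the exchange argument you defer cannot be routine --- the two characterizations genuinely diverge and at least one side needs repair --- and until it is supplied, together with a computable reformulation of your acceptance condition, the proposal establishes only soundness.
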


\begin{proof}
 Let $(\AA,\suc=(\suc_i)_{i\in[n]},c)$ be an arbitrary instance of \sss-\SE. \longversion{We first observe that every instance of \sss-\SE for the co-winner case is a \YES instance since, due to neutrality of $s$, $c\in\AA=\sss((\emptyset)^n)$ and $\suc\in\LL(\AA)^n=\CC((\emptyset)^n)$. So, let us consider the unique winner case. } Our algorithm is as follows. If $c$ receives the worst possible score in \suc, then we output \NO; otherwise we output \YES. Our algorithm runs in polynomial time since $s$ is worst efficient. To prove the correctness of our algorithm, we begin with \Cref{clm:no} below.
 
 \begin{claim}\label{clm:no}
 If the score of $c$ in \suc is the worst possible score (say $\beta_n$) that any alternative in \AA can possibly receive in any profile with $n$ preferences under\longversion{ the voting rule} $s$, then the \sss-\SE instance is a \NO instance.
 \end{claim}
 
 \begin{proof}
 Suppose not, then let us assume that $\RR = (R_i)_{i\in[n]}$ be a partial profile such that $\suc \in \CC(\RR)$ and $\sss(\RR)=\{c\}$. Let $s-MR(c,\RR) = s(s(\suc^\pr), \suc^\pr) - s(c, \suc^\pr)$ for some $\suc^\pr = (\suc_i^\pr)_{i\in[n]} \in \CC(\RR)$. We now claim the following.
 
 \begin{claim}\label{clm:beta}
  $s(c, \suc^\pr) = \beta_n$.
 \end{claim}
 
 \begin{proof}
 The idea of the proof is that if $s(c, \suc^\pr) > \beta_n$, then we can construct another profile which can be used to calculate worse regret for $c$ than $\suc^\pr$ and this will contradict the choice of $\suc^\pr$. Formally, let us define another profile $\bar{\suc} = (\bar{\suc_i})_{i\in[n]}$ where $\bar{\suc_i}$ is obtained from $\suc_i^\pr$ by ``moving'' $c$ immediately to the right of the alternatives that are on the left of $c$ in either $\suc_i^\pr$ or $\suc_i$ for $i\in[n]$; that is, for every $i\in[n]$, $\bar{\suc_i}$ is defined as follows.
 \begin{align*}
  \bar{\suc_i} &= \{(a,b): a,b\in\AA\setminus\{c\}, a\suc_i^\pr b\}\\
  &\cup\{(c,a): a\in\AA, c\suc_i^\pr a, c\suc_i a\}\\
  &\cup\{(a,c):a\in\AA, a\suc_i^\pr c \text{ or } a\suc_i c\}
 \end{align*}
 
 The profile $\bar{\suc}\in\CC(\RR)$ since $\suc\in\CC(\RR)$ and $\suc^\pr\in\CC(\RR)$. Due to monotonicity of $s$, the score of $c$ in $\bar{\suc}$ is at most the score of $c$ in $\suc^\pr$ and the score of every other alternative in $\bar{\suc}$ is at least their score in $\suc^\pr$. However, $\suc^\pr$ has been used to calculate the MR score of $c$ under $s$. Hence, we have the following:
 \[s(s(\suc^\pr), \suc^\pr) = s(s(\bar{\suc}), \suc^\pr),\quad s(c, \suc^\pr) = s(c, \bar{\suc})\]
 
 We now have the following:
 \begin{align*}
  \beta_n \le s(c, \suc^\pr) = s(c, \bar{\suc}) \le s(c,\suc)=\beta_n
 \end{align*}
 The first inequality follows from the definition of $\beta_n$ and the second inequality follows from monotonicity of $s$.
% 
%  Since $s$ satisfies \iia, we have $s(s(\suc^\pr), \suc^\pr) = s(s(\bar{\suc}), \suc^\pr)$ and $s(c, \suc^\pr) \ge s(c, \bar{\suc})$. We now have $s(c, \suc^\pr) = s(c, \bar{\suc}) \le s(c,\suc)=\beta_n$, where the first equality follows from the fact that $s-MR(c,\RR) = s(s(\suc^\pr), \suc^\pr) - s(c, \suc^\pr)$ and the second inequality follows from \iia assumption of $s$. Hence $s(c, \suc^\pr) = \beta_n$ since $\beta_n$ is the worst possible score with $n$ preferences over \AA under $s$.
 \end{proof}
 
 Let $y\in s(\suc^\pr)$ and $s-MR(y,\RR) = s(s(\suc^\prr), \suc^\prr) - s(y, \suc^\prr)$ for some $\suc^\prr = (\suc_i^\prr)_{i\in[n]} \in \CC(\RR)$. We now have the following claim.
 
 \begin{claim}\label{clm:prr}
  $s(s(\suc^\prr), \suc^\prr) \le s(s(\suc^\pr), \suc^\pr)$
 \end{claim}

 \longversion{
 \begin{proof}
 The idea of the proof is along the same line as \Cref{clm:beta}. Let us define another profile $\bar{\suc} = (\bar{\suc_i})_{i\in[n]}$ where $\bar{\suc_i}$ is obtained from $\suc_i^\prr$ by ``moving'' $c$ immediately to the right of the alternatives that are on the left of $c$ in either $\suc_i^\prr$ or $\suc_i^\pr$ for $i\in[n]$; that is, for every $i\in[n]$, $\bar{\suc_i}$ is defined as follows.
 \begin{align*}
  \bar{\suc_i} &= \{(a,b): a,b\in\AA\setminus\{c\}, a\suc_i^\prr b\}\\
  &\cup\{(c,a): a\in\AA, c\suc_i^\prr a, c\suc_i^\pr a\}\\
  &\cup\{(a,c):a\in\AA, a\suc_i^\prr c \text{ or } a\suc_i^\pr c\}
 \end{align*}
 
 The profile $\bar{\suc}\in\CC(\RR)$ since $\suc^\pr\in\CC(\RR)$ and $\suc^\prr\in\CC(\RR)$. Due to monotonicity of $s$, the score of $c$ in $\bar{\suc}$ is at most the score of $c$ in $\suc^\prr$ and the score of every other alternative in $\bar{\suc}$ is at least their score in $\suc^\prr$. Again due to monotonicity of $s$, the score of $c$ in $\bar{\suc}$ is at most the score of $c$ in $\suc^\pr$. However, $\suc^\pr$ has been used to calculate the MR score of $c$ under $s$. Hence, we have $s(s(\suc^\prr), \suc^\prr) \le s(s(\suc^\pr), \suc^\pr)$; otherwise we will have $s-Regret(c, \suc^\pr)<s-Regret(c, \bar{\suc})$ which is a contradiction.
%  the following:
%  \[s(s(\suc^\pr), \suc^\pr) = s(s(\bar{\suc}), \suc^\pr), s(c, \suc^\pr) = s(c, \bar{\suc})\]
% 
%  Now due to \iia assumption of $s$, we have $s(s(\suc^\prr), \suc^\prr) = s(s(\suc^\prr), \bar{\suc}) \le s(s(\bar{\suc}), \bar{\suc})$ and $s(c, \suc^\pr) \ge s(c, \bar{\suc})$. We now have $s(s(\suc^\pr), \suc^\pr) = s(s(\bar{\suc}), \bar{\suc})$ since $s-MR(c,\RR) = s(s(\suc^\pr), \suc^\pr) - s(c, \suc^\pr)$. Hence, we have $s(s(\suc^\prr), \suc^\prr) \le s(s(\bar{\suc}), \bar{\suc}) = s(s(\suc^\pr), \suc^\pr)$.
 \end{proof}
 }
 
 We now combine \Cref{clm:beta,clm:prr} as follows to prove the main claim.
 \begin{align*}
  s-MR(y,\RR) &= s(s(\suc^\prr), \suc^\prr) - s(y, \suc^\prr)\\
  &\le s(s(\suc^\pr), \suc^\pr) - s(y, \suc^\prr)\\
  &\le s(s(\suc^\pr), \suc^\pr) - \beta_n\\
  &= s(s(\suc^\pr), \suc^\pr) - s(c, \suc^\pr)\\
  &= s-MR(c,\RR)
 \end{align*}
 
 The second line follows from \Cref{clm:prr}, the third line follows from the definition of $\beta_n$, and the fourth line follows from \Cref{clm:beta}. Hence we have $s-MR(y,\RR)\le s-MR(c,\RR)$ which contradicts our assumption that $\sss(\RR)=\{c\}$.
 \end{proof}
 
 We now show that if $c$ does not receive the worst possible score with $n$ preferences over \AA under $s$ from the profile $\suc$, then the instance is a \YES instance. To see this, let us consider the partial profile $\PP = (P_i)_{i\in[n]}$ as $P_i=\{c\suc y: c\suc_i y\}$ for every $i\in[n]$. Let \RR be any profile in $\CC(\PP)$. Since, the alternative $c$ does not receive the worst possible score with $n$ preferences over \AA under $s$ from the profile $\suc$, $s(c, \RR) < \beta_n$. Hence, if $\alpha$ is the best possible score with $n$ preferences over \AA under $s$, we have $s-MR(c,\PP)<\alpha-\beta_n$. On the other hand, for any alternative $y\in\AA\setminus\{c\}$, let us consider the profile $\QQ_y = (Q_i)_{i\in[n]}$ where $Q_i = c \suc \cdots \suc y$ for every $i\in[n]$. Now due to monotonicity of $s$, we have $s(c, \QQ_y) = \alpha$ and $s(y, \QQ_y) = \beta_n$. Hence, we have $s-MR(y,\PP) = \alpha-\beta_n$ for every $y\in\AA\setminus\{c\}$ and thus $\sss(\PP) = \{c\}$.
\end{proof}

We remark that the proof of \Cref{thm:all_poly} for the co-winner case is trivial: every instance is a \YES instance since a partial profile where every preference is empty makes every alternative win due to neutrality. Since scoring rules, maximin, Copeland$^\alpha$ for every $\alpha\in[0,1]$, and simplified Bucklin voting rules are monotone, neutral, and worst efficient, \Cref{thm:all_poly} immediately implies the following corollary.

\begin{corollary}\label{cor:all_poly}
 The \SE problem is polynomial time solvable for scoring rules, maximin, Copeland$^\alpha$ for every $\alpha\in[0,1]$, and simplified Bucklin voting rules.
\end{corollary}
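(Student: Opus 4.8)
The plan is to obtain the corollary as an immediate consequence of \Cref{thm:all_poly}, so the entire task reduces to checking that each of the four rule families --- positional scoring rules, maximin, Copeland$^\alpha$ for every $\alpha\in[0,1]$, and simplified Bucklin --- is simultaneously monotone, neutral, and worst efficient as a score based voting rule. Neutrality I would dispatch first and in one stroke: each of these rules depends on a profile only through positions or through pairwise comparisons of the alternatives, both of which are invariant under relabelling, so permuting the alternatives permutes the output set accordingly, directly from the definitions in \Cref{sec:prelim}.

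For monotonicity together with worst efficiency I would then proceed family by family. For a positional scoring rule with (normalized) vector $\alpha$, pushing $x$ up a preference weakly increases the positional weight it draws from that preference because $\alpha_1\ge\cdots\ge\alpha_m$, which gives monotonicity; and the worst score is attained by ranking $x$ last in every preference and is computed in constant time. For maximin, pushing $x$ up weakly increases every $\NN_\EE(x,y)$ and hence their minimum, so the score is monotone, and the worst score is $0$, again attained by ranking $x$ last everywhere. For Copeland$^\alpha$, pushing $x$ up weakly increases each margin $\DD_\EE(x,y)$; a short case split on how an individual margin crosses the value $0$ (each crossing contributes a nonnegative increment of $\alpha$ or $1-\alpha$ or $1$ to the score) shows the Copeland$^\alpha$ score cannot decrease, and the worst score is $0$.

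The one case I expect to require genuine care is simplified Bucklin, because it is a minimization rule whereas \Cref{thm:all_poly} and its proof are phrased for rules whose winners maximize the score. Here I would adopt the convention already fixed in \Cref{sec:prelim} and work with the negated simplified Bucklin score: pushing $x$ up can only weakly lower its simplified Bucklin score, hence weakly raise the negated score, yielding monotonicity, while the worst negated score equals $-m$ (ranking $x$ last everywhere delays its majority to position $m$) and is trivially computable. The only subtlety is the bookkeeping around this min/max convention --- in particular checking that the absolute value in the definition of $s\text{-}Regret$ renders the regret computation sign-agnostic, so that \Cref{thm:all_poly} applies verbatim to the negated rule. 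Once all three properties are in hand for every rule, the corollary follows by a single invocation of \Cref{thm:all_poly}.
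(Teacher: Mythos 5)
Your proposal is correct and takes essentially the same route as the paper: the corollary is obtained by verifying that each rule is monotone, neutral, and worst efficient and then invoking \Cref{thm:all_poly}, with the simplified Bucklin rule handled via the negated-score convention already fixed in \Cref{sec:prelim}. The paper merely states these properties as an observation in the preliminaries and records the worst scores ($0$ for scoring rules, maximin, and Copeland$^\alpha$, and $-m$ for negated simplified Bucklin) in a one-line proof, so your verification simply fills in the same details explicitly.
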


\longversion{
\begin{proof}
%  All the voting rules in the statement are neutral, satisfies \iia if we replace the simplified Bucklin score with negative of that and choose the alternative with highest score (under modified definition of the simplified Bucklin score). 
 The worst possible score with $n$ preferences over $m$ alternatives is $0$ for scoring rules, maximin, and Copeland $^\alpha$ for every $\alpha\in[0,1]$, and $-m$ for the simplified Bucklin voting rule (under modified but equivalent definition of simplified Bucklin voting rule).
\end{proof}
}

\section{Hardness Results for \SECPL}\label{sec:results_hard}

In this section we show that the \SECPL problem is \NPC for maximin, Copeland$^\alpha$ for every $\alpha\in[0,1]$, simplified Bucklin, and a large class of scoring rules which includes the $k$-approval voting rule for every $k$ and the Borda voting rule. \shortversion{In the interest of space, we omit some of our proofs. They are available in the supplementary material.}

\longversion{\subsection{Scoring Rules}}

Let us define a restricted version of the classical set cover problem which we call \SCFT. We will see in \Cref{lem:scft_hard} that this problem is \NPC by reducing from the vertex cover problem which is well known to be \NPC~\cite{garey1979computers}. Most of our \NP-hardness reductions are from this problem.
\begin{definition}[\SCFT]
 Given a universe \UU of cardinality $q$, a family $\SS=\{S_i:i\in[t]\}$ of $t$ subsets of \UU such that for every $a\in\UU$, we have $|\{i\in[t]:a\in S_i\}|=2$, and a positive integer \el, compute if there exists a subset $\GG\subseteq\SS$ containing at most \el sets such that $\cup_{A\in\GG} A = \UU$. We denote an arbitrary instance of \SCFT by $(\UU,\SS,\el)$.
\end{definition}

\begin{lemma}\label{lem:scft_hard}
 \SCFT is \NPC.
\end{lemma}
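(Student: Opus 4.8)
The plan is to first establish membership in \NP\ and then give a polynomial-time many-one reduction from \VC, which is \NPC~\cite{garey1979computers}. For membership, I would use the candidate subfamily $\GG\subseteq\SS$ itself as a certificate: checking that $|\GG|\le\el$ and $\cup_{A\in\GG}A=\UU$ takes time polynomial in $q$ and $t$, so \SCFT\ lies in \NP.

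For hardness, given an instance $(G=(V,E),k)$ of \VC\ on a simple graph, I would set the universe to be the edge set, $\UU=E$, and for each vertex $v\in V$ introduce the set $S_v=\{e\in E: v\in e\}$ consisting of the edges incident to $v$; then take $\SS=\{S_v:v\in V\}$ and $\el=k$. The key structural observation --- and the reason this particular restriction of set cover is the natural target --- is that, because $G$ is simple, every edge $e=\{u,v\}$ has exactly two endpoints and therefore belongs to exactly the two sets $S_u$ and $S_v$. Consequently every element of $\UU$ occurs in exactly two sets of $\SS$, so the frequency-two condition is satisfied automatically, with no extra gadgetry.

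Correctness then follows from the bijection between vertices and sets: a subset $T\subseteq V$ is a vertex cover of $G$ (it meets every edge) if and only if the corresponding subfamily $\{S_v:v\in T\}$ covers $\UU$, and the two cardinalities coincide. Hence $G$ admits a vertex cover of size at most $k$ exactly when $(\UU,\SS,\el)$ admits a subcover of size at most $\el$, and the construction is clearly computable in polynomial time.

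The only point demanding care --- the ``main obstacle,'' though it is genuinely minor here --- is ensuring the frequency is exactly two rather than at most two on degenerate inputs. Self-loops are ruled out by the standing assumption that \VC\ is posed on a simple graph (so no element has frequency one), while isolated vertices merely contribute empty sets $S_v=\emptyset$, which never appear in a minimal cover and can be discarded (or removed from $G$ in a preprocessing step) without changing the answer. With these conventions every element has frequency exactly two, and the reduction goes through.
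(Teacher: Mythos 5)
Your proof is correct and follows essentially the same route as the paper: membership in \NP\ via the subfamily certificate, and a reduction from \VC\ that takes $\UU$ to be the edge set and $S_v$ to be the edges incident to $v$, with the frequency-two condition holding automatically since each edge has exactly two endpoints. The extra remarks about self-loops and isolated vertices are fine but not needed under the standard formulation of \VC.
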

\longversion{
\begin{proof}
 \SCFT clearly belongs to \NP. To prove \NP-hardness, we reduce from \VC to \SCFT. Let $(\GG=(\VV,\EE),k)$ be an arbitrary instance of \VC. We construct and instance $(\UU,\SS,\el)$ as follows:
 \begin{align*}
  \UU = \{a_e: e\in\EE\},\SS= \{S_v: v\in\VV\}\text{ where } S_v=\{a_e: e \text{ is incident on } v\}, \el=k
 \end{align*}
 
 Clearly every element $a_e\in\UU$ belongs to exactly two sets, namely $S_u$ and $S_v$ if $e=\{u,v\}$. Also the equivalence of two instances are straight forward.
\end{proof}
}

We begin with showing that the \SECPL problem is \NPC for a large class of scoring rules which included the $k$-approval voting rule for every $3\le k\le \gamma m$ for any constant $0<\gamma<1$ and the Borda voting rule. While describing a (complete) preference, if we do not mention the order of any two alternatives, they can be ordered arbitrarily. On the other hand, if we are describing a partial preference and we do not mention the order of any two alternatives, then they should be assumed to be incomparable.

\begin{theorem}\label{thm:secpl_sc}
 Let $r$ be a normalized scoring rule such that there exists a function $g:\NB\longrightarrow\NB$ such that for every $m\in \NB$, we have $3m\le g(m)\le poly(m)$ and if $\alpha=(\alpha_i)_{i\in[g(m)]}$, then there exists a positive integer \ppp such that $3\le \ppp\le g(m)-m+3, \alpha_\ppp>\alpha_{\ppp+1}$ and $\alpha_{\ppp-1}=poly(m)$. Then the \SECPL problem is \NPC for the scoring rule $r$.
\end{theorem}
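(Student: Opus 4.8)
The plan is to establish membership in \NP and then \NPH-ness by reducing from \SCFT, which is \NPC by \Cref{lem:scft_hard}. Membership is straightforward: a partial profile $\PP$ with $\suc\in\CC(\PP)$ is a polynomial-size certificate, the constraints $\lll(\{a,b\})\le\ppp_{\{a,b\}}(\PP)$ are checkable directly, and for a normalized scoring rule each $s\text{-}MR(a,\PP)$ is computable in polynomial time (the adversary's worst completion for a fixed $a$ is found greedily, pushing $a$ down and a competitor up subject to the revealed comparisons), so $\sss(\PP)=\{x\}$ can be verified. The reduction will exploit the monotonicity $\CC(\PP^\pr)\subseteq\CC(\PP)$ whenever $\PP^\pr$ reveals more comparisons than $\PP$, which forces $s\text{-}MR(a,\PP^\pr)\le s\text{-}MR(a,\PP)$ for every $a$: unconstrained, the manipulator reveals $x$ above everything it beats to crush its own regret while revealing nothing among the other alternatives, which is exactly why \Cref{thm:all_poly} is easy; a positive pair-limit instead forces reveals among the non-$x$ alternatives that shrink their regrets as well, and the combinatorial freedom in choosing \emph{where} those forced comparisons go is what I will use to encode set cover.

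For hardness, let $(\UU,\SS,\el)$ be a \SCFT instance with $\UU$ of size $q$ and $\SS=\{S_i:i\in[t]\}$, each element lying in exactly two sets. I would pick $m=\Theta(q+t)$ so that the election on $g(m)$ alternatives has, by $g(m)\ge 3m$ and $\ppp\le g(m)-m+3$, at least $q+t+1$ free positions at and below $\ppp-1$ to host the active alternatives, while positions $1,\dots,\ppp-2$ serve as padding. The active alternatives are the distinguished $x$, one set alternative $s_i$ per $S_i$, and one element alternative $u_j$ per element; all remaining alternatives are dummies. In $\suc$ and in every completion I would force the dummies to occupy positions $1,\dots,\ppp-2$ via revealed comparisons, so every active alternative sits at a position $\ge\ppp-1$. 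Since $\alpha_{\ppp-1}=\text{poly}(m)$ and the score vector is non-increasing, every score an active alternative can attain in any completion is then polynomially bounded; this is precisely what lets the regret comparisons be controlled with polynomially large integers even though $\alpha_1,\dots,\alpha_{\ppp-2}$ may be astronomically large (which is why the padding must keep active alternatives out of those top positions).

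The single strict drop $\delta=\alpha_\ppp-\alpha_{\ppp+1}>0$ is the unit of account: the only controlled way the minimax-regret adversary can change an active alternative's score is by sliding it from position $\ppp$ to $\ppp+1$. I would set $\lll$ positive precisely on the pairs encoding the element–set incidences (each $u_j$ thereby coupled to its two set alternatives) and $0$ elsewhere, so that any feasible manipulative profile must reveal, for every element, a comparison tying it to one of its two sets, i.e.\ a covering witness. The budget $\el$ enters as the number of set alternatives the manipulator can afford to push across the drop while keeping $s\text{-}MR(x,\PP)$ strictly below every competitor's regret. The intended correspondence is: a cover $\GG\subseteq\SS$ with $|\GG|\le\el$ yields a partial profile whose revealed incidences cover $\UU$ (satisfying all limits) and demote exactly the chosen set alternatives by $\delta$, making $s\text{-}MR(x,\PP)$ strictly smallest, so $\sss(\PP)=\{x\}$; conversely, any feasible manipulative profile must, through the forced incidence comparisons and the $\el$-budget, expose at most $\el$ set alternatives whose sets cover $\UU$.

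The hard part will be the exact minimax-regret bookkeeping for an \emph{arbitrary} normalized scoring rule, since I only have the one guaranteed strict drop at $\ppp$ plus the bound $\alpha_{\ppp-1}=\text{poly}(m)$, rather than the clean structure of a concrete rule like $k$-approval. Every inequality separating $x$'s regret from the others' must be engineered using only the gap $\delta$ and the confinement of all active alternatives to positions $\ge\ppp-1$: I must show that the adversary's worst completion for $x$ can gain at most a controlled multiple of $\delta$, that for each enemy $y$ there is a completion losing strictly more, and that revealing the forced incidence comparisons only shrinks regrets in a way consistent with this separation. Checking the boundary regimes ($\ppp$ near $3$ and $\ppp$ near $g(m)-m+3$) and verifying that the padding genuinely traps the relevant scores at $\alpha_{\ppp-1},\alpha_\ppp,\alpha_{\ppp+1}$ is where the care lies; once these separation inequalities are in place, the equivalence with \SCFT is routine.
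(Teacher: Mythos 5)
Your overall strategy---\NP membership via polynomial-time computability of minimax regret for positional scoring rules, plus an \NPH-ness reduction from \SCFT that confines all ``active'' alternatives to positions $\ge \ppp-1$ and uses the single guaranteed drop $\alpha_\ppp>\alpha_{\ppp+1}$ as the unit of account---matches the paper in spirit. But the core encoding you propose has a genuine flaw. You place the lower bounds \lll on the element--set incidence pairs $\{u_j,s_{i}\}$ and claim this forces every element to be ``tied to one of its two sets, i.e.\ a covering witness.'' The constraints \lll are conjunctive per-pair lower bounds on how many preferences must make that specific pair comparable; setting them positive on both incidence pairs of $u_j$ forces both pairs to be revealed somewhere, and encodes no disjunction (``$u_j$ is covered by $S_{i_1}$ \emph{or} $S_{i_2}$'') and no global cardinality bound. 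Indeed, the budget \el appears nowhere in your \lll function---you relegate it to unspecified ``score arithmetic''---so the reduction as described has no mechanism that caps the number of selected sets at \el.

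The paper resolves exactly this by a different placement of the constraints: there are no set alternatives at all (the alternatives are one $a_j$ per universe element, the distinguished $c$, a special alternative $d$, and padding $W$); each set $S_i$ corresponds to one \emph{preference} of the form $W_{\ppp-3}\suc S_i\suc d\suc c\suc(\UU\setminus S_i)\suc\cdots$, and the entire budget lives on a single pair via $\lll(\{c,d\})=n-\el$ and $\lll(\{d,y\})=n$ for all other $y$. Choosing the at most \el preferences in which $c$ and $d$ may be left incomparable is precisely choosing the cover, and coverage is policed by the element alternatives: if $u_k$ is uncovered, the regret of $a_k$ drops to (at most) that of $c$, destroying unique winning. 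To salvage your version you would need either to redesign \lll so that the selection and the cardinality bound are both enforced (the paper's single-pair trick is the natural way), or to carry out in full the deferred ``separation inequalities''---which, together with the explicit profile and the regret table, constitute essentially the whole proof and are absent from your sketch.
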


\longversion{
\begin{proof}
 The \SECPL problem clearly belongs to \NP. To prove \NP-hardness, we reduce from \SCFT to \SECPL for the scoring rule $r$.  Let $(\UU=\{u_1, \ldots, u_q\}, \SS=\{S_i: i\in[t]\},\el)$ be an arbitrary instance of \SCFT. Let us consider the following instance $(\AA, \PP, c, \lll)$ of \SECPL where \AA is defined as follows.
 \longversion{\[ \AA = \{a_i:i\in[q]\}\cup\{c,d\} \cup W, \text{ where } W=\{w_1, \ldots, w_{g(q)-q-2}\}\]}
 \shortversion{
 \begin{align*}
  \AA &= \{a_i:i\in[q]\}\cup\{c,d\} \cup W,\\
  &\text{ where } W=\{w_1, \ldots, w_{g(q)-q-2}\}
 \end{align*}
 }
 
 The profile \PP consists of the following preferences. For an integer $0\le k\le g(q)-q-2$, we denote the set $\{w_i:i\in[k]\}$ by $W_k$. Let $\kappa=\max\{i\in[g(q)]:\alpha_i\ne 0\}$; we observe that $\alpha_\kappa=1$ since $r$ is normalized. For $X\subseteq\UU$, let us denote the set $\{a_j: u_j\in X\}$ of alternatives by $X$ to simplify notation.
 \begin{itemize}
  \item $\forall i\in[t]: W_{\ppp-3} \suc S_i\suc d\suc c\suc (\UU\setminus S_i)\suc (W\setminus W_{\ppp-3})$
  \item $t-2$ copies of $W_{\ppp-1}\suc a_i\suc d\suc (\UU\setminus\{a_i\})\suc (W\setminus W_{\ppp-1})\suc c$
  \item $t(\alpha_\ppp-\alpha_{\ppp+2})$ copies of $W_2\suc \cdots\suc c\suc \cdots d$ where the alternative $c$ is placed at $\kappa$ position from left.
  \item If $\alpha_{g(q)-1}=1$, then we add $(q+1)t\alpha_{\ppp-1}$ copies of $\cdots\suc c\suc d$
  \item Otherwise:
  \begin{itemize}
   \item If $\kappa\le g(q)-q-1$, then we add $(q+1)t\alpha_{\ppp-1}$ copies of $W_2\suc\cdots\suc c\suc d\suc \UU\suc\cdots$ where the alternative $d$ is placed at $\kappa+1$ position from left and we add, for every $i\in[q]$, $(q+1)t\alpha_{\ppp-1}$ copies of $W_2\suc\cdots\suc a_i\suc d\suc c\suc (\UU\setminus\{a_i\})\suc\cdots$ where the alternative $d$ is placed at $\kappa+1$ position from the left.
   \item Otherwise we add $(q+1)t\alpha_{\ppp-1}$ copies of $W_2\suc\cdots\suc \UU\suc c\suc d\suc \cdots$ where the alternative $d$ is placed at $\kappa+1$ position from the left.
  \end{itemize}
 \end{itemize}

 For ease of reference, we call the above four groups as $\GG_1, \GG_2, \GG_3,$ and $\GG_4$ respectively. Let $n$ be the number of preferences in \PP. We observe that $n=poly(m)$ since $\alpha_{\ppp-1}=poly(m)$. The function \lll is defined as follows: $\lll(\{d,x\})=n$ for every $x\in\AA\setminus(\{c, d\}), \lll(\{d,c\})=n-\el$; the value of \lll be $0$ for all other pairs of alternatives. This finishes the description of our reduced instance. We now claim that the two instances are equivalent.
 
 In one direction, let us assume that the \SCFT instance is a \YES instance; without loss of generality, let us assume (by renaming) that $S_1, \ldots, S_\el$ forms a set cover of \UU. Let us consider the following partial profile \QQ with $\PP\in\LL(\QQ)$.
 
 \begin{itemize}
  \item Preferences in $\GG_1$: $\forall i\in[\el]: ((W_{\ppp-3} \cup S_i)\suc d\suc ((\UU\setminus S_i)\cup (W\setminus W_{\ppp-3})) \bigcup c\suc ((\UU\setminus S_i))\cup ((W\setminus W_{\ppp-3})))$
  
  \item Preferences in $\GG_1$: $\forall i \text{ with } \el+1\le i\le t: (W_{\ppp-3} \cup S_i)\suc d\suc c\suc ((\UU\setminus S_i)\cup (W\setminus W_{\ppp-3})$
  
  \item Preferences in $\GG_2$: $t-2$ copies of $(W_{\ppp-1}\cup a_i)\suc d\suc ((\UU\setminus\{a_i\})\cup (W\setminus W_{\ppp-1})\cup \{c\})$
  
  \item Preferences in $\GG_3$: $t(\alpha_\ppp-\alpha_{\ppp+2})$ copies of $c\suc X$ where the alternative $X=\{b\in\AA: c\suc b \text{ in } \GG_3\}$.
  
  \item Preferences in $\GG_4$: for every preference in $\GG_4$, we add $c\suc Y$ where the alternative $Y=\{b\in\AA: c\suc b \text{ in the corresponding preference in } \GG_4\}$.
 \end{itemize}
 
 Let $\Delta$ be the score that the alternative $w_1$ receives in \PP. We observe that the minimum scores that the alternatives $c$ and $a_i, i\in[q]$ receive in profile \RR with $\RR\in\LL(\QQ)$ are all the same; let it be $\lambda$. We summarize the MR score (based on $r$) of every alternative from \QQ in \Cref{tbl:scores_secpl_sc}. Hence the alternative $c$ wins uniquely in \QQ.
 \begin{table}[!htbp]
 \begin{center}
 \begin{tabular}{|c|c|c|c|}\hline
  
   \shortversion{Alternative & \makecell{MR-$r$ score\\ from \QQ} & \makecell{Competing\\ alternative}\\\hline\hline}
   \longversion{Alternative & MR-$r$ score from \QQ & Competing alternative\\\hline\hline}
   
   $c$ & $\Delta-t\alpha_\el -\lambda$ & $w_1$ (or $w_2$) \\\hline
   $a_i, \forall i\in[q]$ &$\Delta-t\alpha_\el+\alpha_{\el+1} -\lambda$ & $w_1$ (or $w_2$)\\\hline
   $w_1 (w_2)$ &$\Delta$& $w_2 (w_1)$ \\\hline
   $w\in W\setminus W_2$ & $\Delta$ & $w_1$\\\hline
   $d$ &$>D-t\alpha_\el$ & $w_1$ (or $w_2$)\\\hline
  \end{tabular}
  \caption{Summary of MR scores (based on $r$) of all the alternatives from the partial profile \QQ in the proof of \Cref{thm:secpl_sc}.}\label{tbl:scores_secpl_sc}
 \end{center}
 \end{table}
 
 In the other direction, let us assume that the \SECPL instance $(\AA, \PP, c, \lll)$ is a \YES instance. Let $\QQ$ be a partial profile such that $\PP\in\CC(\QQ)$ and the alternative $c$ wins uniquely in \QQ under the MR scoring rule based on $r$. We observe that if a preference profile $\RR_c$ with $\RR_c\in\LL(\QQ)$ is used to calculate the MR score of the alternative based on $r$, then the MR score of $c$ based on $r$ is at least $\Delta-t\alpha_\el -\lambda$ using the alternative $w_1$ as a competing alternative where $\lambda$ and $\Delta$ are as defined above. Let $J\subseteq[t]$ be the set of $i\in[t]$ such that the corresponding partial preferences in the group $\GG_1$ in \QQ leave the alternatives $c$ and $d$ incomparable. Since $\lll(\{d,c\})=n-\el$, we have $|J|\le\el$. We claim that $\{S_j: j\in J\}$ forms a set cover of \UU. Suppose not, then let $u_k\in\UU\setminus(\cup_{j\in J} S_j)$. Then we observe that the MR score of $a_k$ based on $r$ is at least $\Delta-t\alpha_\el -\lambda$ using the alternative $w_1$ as a competing alternative.However, this contradicts our assumption that $c$ is the unique MR-$r$ winner of \QQ. Hence $\{S_j: j\in J\}$ forms a set cover of \UU and thus the \SCFT is a \YES instance. This concludes the proof of the theorem.
\end{proof}
}

\Cref{thm:secpl_sc} immediately gives us the following corollary.

\begin{corollary}\label{cor:kapp_borda}
 The \SECPL problem is \NPC for the Borda and $k$-approval voting rules for every $3\le k\le \gamma m$ for any constant $0<\gamma<1$.
\end{corollary}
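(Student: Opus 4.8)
The plan is to obtain the corollary simply by instantiating \Cref{thm:secpl_sc}: both the Borda and the $k$-approval rules are normalized scoring rules, so it suffices to exhibit, for each of them, a function $g:\NB\longrightarrow\NB$ and a drop position \ppp that witness every hypothesis of the theorem. Membership in \NP is inherited from \Cref{thm:secpl_sc}, so the entire task reduces to this short verification.

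Consider Borda first. On $M$ alternatives its normalized score vector is $\alpha_i=M-i$, which has $\gcd$ equal to $1$ and a single trailing zero. I would take $g(m)=3m$ and $\ppp=3$. Then $3m\le g(m)=3m\le poly(m)$; the range requirement $3\le\ppp\le g(m)-m+3=2m+3$ holds for every $m\ge 1$; consecutive Borda scores strictly decrease, so $\alpha_\ppp=3m-3>3m-4=\alpha_{\ppp+1}$; and $\alpha_{\ppp-1}=\alpha_2=3m-2=poly(m)$. Hence Borda meets all the hypotheses of \Cref{thm:secpl_sc}.

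For $k$-approval the score vector on $M$ alternatives is $1$ on its first $k$ coordinates and $0$ afterwards, which is normalized whenever $k<M$. Its unique strict decrease sits at position $k$, so I would set $\ppp=k$; then $\alpha_\ppp=1>0=\alpha_{\ppp+1}$, and since $k\ge 3$ keeps $\ppp-1=k-1\ge 2$ inside the block of ones we also get $\alpha_{\ppp-1}=1=poly(m)$. The one condition that needs genuine care is $3\le\ppp\le g(m)-m+3$, that is $k\le g(m)-m+3$, where $g(m)$ is the number of alternatives produced by the reduction and the corollary's hypothesis supplies $k\le\gamma g(m)$.

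The main (indeed the only) point is to choose $g$ so that this last inequality absorbs every constant $\gamma\in(0,1)$. I would set $g(m)=c\,m$ with $c=\max\{3,\lceil 1/(1-\gamma)\rceil\}$, a constant depending only on $\gamma$. The bound $c\ge 3$ guarantees $3m\le g(m)\le poly(m)$ as well as $k\le\gamma g(m)<g(m)$, which in turn secures normalization; and the bound $c\ge 1/(1-\gamma)$ gives $c(1-\gamma)\ge 1$, whence $\gamma g(m)=\gamma cm\le (c-1)m+3=g(m)-m+3$ and therefore $\ppp=k\le g(m)-m+3$. With these instantiations \Cref{thm:secpl_sc} shows that \SECPL is \NPC for Borda and for $k$-approval throughout the range $3\le k\le\gamma m$, which proves the corollary.
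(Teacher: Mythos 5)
Your proposal is correct and matches the paper's route exactly: the paper derives \Cref{cor:kapp_borda} by direct instantiation of \Cref{thm:secpl_sc} (it offers no further proof, calling the corollary immediate), and your choices of $g$ and \ppp for Borda and $k$-approval verify all hypotheses, including the only delicate one, $\ppp = k \le g(m)-m+3$, via $c \ge \lceil 1/(1-\gamma)\rceil$. Nothing is missing.
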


A drawback of \Cref{thm:secpl_sc} is that it does not cover the plurality, $2$-approval, and the $k$-veto voting rules for $k=o(m)$. We will show that the \SECPL problem is \NPC for the $k$-veto voting rule for any $1\le k\le \gamma m$ for any constant $0<\gamma<1$ in \Cref{thm:secpl_veto}. We now show in \Cref{thm:secpl_plurality_2app} that the \SECPL problem is \NPC for the plurality and $2$-approval voting rules by reducing it from the \XTC problem which is defined as follows and known to be \NPC~\cite{garey1979computers}.

\begin{definition}[\XTC]
 Given a universe \UU of cardinality $q$ such that $q$ is divisible by $3$, a family $\SS=\{S_i:i\in[t]\}$ of $t$ subsets of \UU each of cardinality $3$, compute if there exists a subset $\GG\subseteq\FF$ of $\nfrac{q}{3}$ sets such that $\cup_{A\in\GG} A = \UU$. We denote an arbitrary instance of \XTC by $(\UU,\SS)$.
\end{definition}

\begin{theorem}\label{thm:secpl_plurality_2app}
 The \SECPL problem is \NPC for the plurality and the $2$-approval voting rules.
\end{theorem}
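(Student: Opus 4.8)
The plan is to show membership in \NP and then reduce from \XTC. For membership, given a candidate partial profile \PP one can check $\suc\in\CC(\PP)$, verify $\lll(\{a,b\})\le\ppp_{\{a,b\}}(\PP)$ for every pair, and compute $\sss(\PP)$ in polynomial time, since the minimax regret under plurality and $2$-approval is polynomial-time computable; hence \SECPL lies in \NP. For hardness I would start from an \XTC instance $(\UU,\SS)$ with $|\UU|=q$ divisible by $3$ and $\SS=\{S_i:i\in[t]\}$, each $|S_i|=3$, and build a \SECPL instance $(\AA,\PP,c,\lll)$ with $\AA=\{a_j:u_j\in\UU\}\cup\{c,d\}\cup W$, where $c$ is the distinguished alternative, $d$ is a ``blocker'', and $W$ is a tiny set of auxiliary alternatives (one dummy suffices to shift the decisive position from the top, for plurality, to the second position, for $2$-approval). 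As in the proof of \Cref{thm:secpl_sc}, I would use one gadget group of preferences per set $S_i$, in which the three alternatives $\{a_j:u_j\in S_i\}$ together with $c,d$ occupy the decisive positions, plus normalizing groups of repeated preferences that pin down the base scores of every $a_j$ and of $c$.

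The pair-limit \lll is the crux of the encoding. I would set $\lll(\{d,x\})=n$ for every $x\in\AA\setminus\{c,d\}$, so that $d$ must stay comparable to every non-$c$ alternative in all $n$ partial preferences and its position relative to them is frozen to its position in \suc, and I would set $\lll(\{d,c\})=n-q/3$, so that the $c$--$d$ comparison may be hidden in at most $q/3$ partial preferences, all remaining pairs being unconstrained ($\lll=0$). The intended correspondence is: hiding the $c$--$d$ comparison in the gadget for $S_i$ means selecting $S_i$ into the cover, and this single hiding is what lets $c$ overtake $d$ in the decisive position of that gadget while simultaneously neutralizing the worst-case regret threat coming from the three alternatives $a_j$ with $u_j\in S_i$.

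For the forward direction, given an exact cover $S_{i_1},\dots,S_{i_{q/3}}$ I would hide $c$--$d$ exactly in those $q/3$ gadgets (respecting the bound $n-q/3$) and verify, by tabulating the minimax-regret scores in the style of \Cref{tbl:scores_secpl_sc}, that $c$ becomes the unique minimax-regret winner: $c$'s regret drops strictly below that of every $a_j$ precisely because every element is covered, while $d$ and the dummies remain non-competitive. For the converse, from a partial profile making $c$ the unique winner I would let $J$ index the gadgets in which $c$--$d$ is left incomparable; the bound $\lll(\{d,c\})=n-q/3$ forces $|J|\le q/3$, and if some $u_k$ were covered by no $S_j$ with $j\in J$, then $a_k$ would achieve regret at least that of $c$ using the same competing alternative, contradicting uniqueness. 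Hence $\{S_j:j\in J\}$ covers \UU with at most $q/3$ size-$3$ sets, which forces exactly $q/3$ pairwise-disjoint sets, i.e.\ an exact cover.

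The hard part will be the precise score bookkeeping, and this is exactly why plurality and $2$-approval need a separate argument from \Cref{thm:secpl_sc}: there the decisive gap sat at a position $\ppp\ge 3$ with a neighbouring coordinate of magnitude $poly(m)$, giving room to absorb a boost, whereas here the only gap is between positions $1$ and $2$ for plurality (and $2$ and $3$ for $2$-approval), with unit score difference. Consequently the multiplicities of the normalizing groups and the frozen position of $d$ must be tuned so that a single covered versus uncovered element shifts $c$'s (resp.\ $a_k$'s) maximum regret by exactly the integer needed to separate the winner, and one must check that the adversary's worst-case completion cannot exploit the hidden $c$--$d$ comparisons to inflate $c$'s regret in an unintended way. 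Adapting the construction from plurality to $2$-approval should then require only inserting the extra dummy so that the relevant comparisons occur at the second rather than the first position, after which the same equivalence argument carries over.
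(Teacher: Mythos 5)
Your membership argument and your choice of \XTC as the source problem match the paper, but the core of your reduction --- the design of \lll --- does not work for plurality, and it is not the design the paper uses. You propose freezing $d$ against every non-$c$ alternative ($\lll(\{d,x\})=n$) and placing a single budget $\lll(\{d,c\})=n-q/3$ on the pair $\{c,d\}$, so that ``hiding $c$--$d$ in the gadget for $S_i$'' encodes selecting $S_i$ and ``neutralizes the regret threat from the three $a_j$ with $u_j\in S_i$.'' Under plurality this mechanism transmits no per-element information. An alternative's maximum regret is (maximum attainable score of the strongest competitor) minus (its own minimum attainable score); leaving $\{c,d\}$ incomparable in a gadget changes neither quantity for any $a_j$, because the adversary computing $a_j$'s regret is free to complete the hidden pair either way, and it changes $c$'s own regret only for the worse (the adversary completes it against $c$). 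Moreover, since you leave every pair $\{c,a_j\}$ unconstrained, nothing ties the at most $q/3$ gadgets in which $\{c,d\}$ is hidden to covering \UU: your claimed converse step ``if $u_k$ is uncovered then $a_k$ achieves regret at least that of $c$'' has no justification, as $a_k$'s minimum score and its competitor's maximum score are independent of whether $u_k$ is covered. The difficulty you defer to ``precise score bookkeeping'' is exactly where the construction breaks; this is why \Cref{thm:secpl_sc} (whose mechanism you are importing) genuinely needs a position $\ppp\ge 3$ and a large coordinate $\alpha_{\ppp-1}$, and excludes plurality.

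The paper's actual reduction inverts the logic. Since under plurality hiding information can only raise $c$'s regret, the construction keeps $c$ pinned to the top of a block of $\nfrac{q}{6}+1$ preferences (fixing $c$'s regret at $t-\nfrac{q}{6}$ against a floating blocker $d$ whose within-gadget comparisons are almost entirely hidden, so $d$'s maximum score is large and its minimum is $0$), and then must make $d$'s regret exceed $c$'s by making $c$'s \emph{maximum} score large, i.e.\ by lifting $c$ to the top of at least $t-\nfrac{q}{3}$ gadgets. Lifting $c$ in the gadget for $S_i$ requires hiding $\{c,a_j\}$ for every $u_j\in S_i$, and the budgets are placed on precisely these pairs, $\lll(\{c,a_j\})=n-f_j+1$ where $f_j$ is the frequency of $u_j$, so each element forces at least one of its gadgets to remain ``unlifted''; the at most $\nfrac{q}{3}$ unlifted gadgets must therefore hit every element, which over $q$ elements with size-$3$ sets is an exact cover. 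An auxiliary alternative $w$ with near-tight constraints $\lll(\{w,\cdot\})$ completes the bookkeeping. If you want to salvage your write-up, you should replace your \lll with frequency-calibrated budgets on the pairs $\{c,a_j\}$ and redo both directions of the equivalence around the quantity ``number of gadgets in which $c$ can be completed to the top position.''
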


\begin{proof}
 Let us first consider the plurality voting rule. The \SECPL problem for the plurality voting rule clearly belongs to \NP. To prove \NP-hardness, we reduce from \XTC to \SECPL for the plurality voting rule. Let $(\UU=\{u_1, \ldots, u_q\}, \SS=\{S_i: i\in[t]\})$ be an arbitrary instance of \XTC. For every $i\in[q]$ let us define $f_i=|\{j\in[t]:u_i\in S_j\}|$. Let us assume, without loss of generality, that $f_i<t-\nfrac{q}{2}$ (if not, then we add $3t$ new elements in \UU and $t$ sets in \SS each of size $3$ and collectively covering these new $3t$ elements). Let us assume, without loss of generality, that $q$ is divisible by $6$; if not then we add $3$ new elements in \UU and a set consisting of these three new elements in \SS. Let us consider the following instance $(\AA, \PP, c, \lll)$ of \SECPL where \AA is defined as follows.
 \[ \AA = \{a_i:i\in[q]\}\cup\{c, d, w\} \]
 
 The profile \PP consists of the following preferences. For $X\subseteq\UU$, let us also denote, for the sake of simplicity of notation, the set $\{a_j: u_j\in X\}$ of alternatives by $X$.
 \begin{itemize}
  \item $\forall i\in[t]: d\suc S_i \suc c\suc (\UU\setminus S_i)\suc w$
  \item $\nfrac{q}{6}+1$ copies of $c\suc (\AA\setminus\{c,w\})$
  \item $1$ copy of $d\suc w \suc c\suc (\AA\setminus\{d,w\})$
 \end{itemize}
 For ease of reference, we call the above three groups as $\GG_1,\GG_2,$ and $\GG_3$ respectively. Let $n$ be the number of preferences in \PP. That is, $n=t+\nfrac{q}{6}+2$. The function \lll is defined as follows: $\lll(\{c,a_i\})=n-f_i+1$ for every $i\in[q]$, $\lll(\{w,d\})=n-1,$ $\lll(\{w,x\})=n$ for every $x\in\AA\setminus\{w,d\}$; the value of \lll be $0$ for all other pairs of alternatives. This finishes the description of our reduced instance. We now claim that the two instances are equivalent.
 
 In one direction, let us assume that the \XTC instance is a \YES instance; without loss of generality, let us assume (by renaming) that $S_1, \ldots, S_{\nfrac{q}{3}}$ forms a set cover of \UU. Let us consider the following partial profile \QQ with $\PP\in\LL(\QQ)$.
 
 \begin{itemize}
  \item Preferences in $\GG_1$: $\forall i\in[\nfrac{q}{3}]: (S_i \suc c \suc (\UU\setminus S_i)\suc w) \bigcup (d\suc w)$
  
  \item Preferences in $\GG_1$: $\forall i \text{ with } \nfrac{q}{3}+1\le i\le t: (c \suc (\UU\setminus S_i)\suc w)\bigcup (d\suc w)$
  
  \item Preferences in $\GG_2$: $\nfrac{q}{6}+1$ copies of $c\suc (\AA\setminus\{c,w\})$
  
  \item Preferences in $\GG_3$: $1$ copy of $w \suc c\suc (\AA\setminus\{d,w\})$
 \end{itemize}
 
 We summarize the MR-plurality score of every alternative from \QQ in \Cref{tbl:scores_secpl_plurality}. Hence the alternative $c$ wins uniquely in \QQ.
 \begin{table}[!htbp]
 \begin{center}
 \begin{tabular}{|c|c|c|c|}\hline
  
  \shortversion{Alternative & \makecell{MR-plurality score\\ from \QQ} & \makecell{Competing\\ alternative}\\\hline\hline}
   \longversion{Alternative & MR-plurality score from \QQ & Competing alternative\\\hline\hline}
%    Alternative & MR-plurality score from \QQ & Competing alternative\\\hline\hline
   
   $c$ & $t-\nfrac{q}{6}$ & $d$ \\\hline
   $a_i, \forall i\in[q]$ &$t$ & $d$\\\hline
   $d$ & $t-\nfrac{q}{6}+1$ & $c$\\\hline
  \end{tabular}
  \caption{Summary of MR-plurality scores of all the alternatives from the partial profile \QQ in the proof of \Cref{thm:secpl_plurality_2app}.}\label{tbl:scores_secpl_plurality}
 \end{center}
 \end{table}
 
 In the other direction, let us assume that the \SECPL instance $(\AA, \PP, c, \lll)$ is a \YES instance. Let $\QQ$ be a partial profile such that $\PP\in\CC(\QQ)$ and the alternative $c$ wins uniquely in \QQ under the MR-plurality voting rule. Let $J\subseteq[t]$ be the set of $i\in[t]$ such that the corresponding partial preferences in the group $\GG_1$ in \QQ leave the alternatives $c$ and at least one alternative in $S_i$ incomparable. A key observation is that since $\lll(\{c,a_i\})=n-f_i+1$ for $i\in[q]$, we have $\cup_{j\in J} S_j = \UU$. Hence we have $|J|\ge \nfrac{q}{3}$. We now claim that $|J|\le \nfrac{q}{3}$. Suppose not, then the MR-plurality score of $d$ is at most $(t-\nfrac{q}{3}-1)+\nfrac{q}{6}+1=t-\nfrac{q}{6}$ using $c$ as competing alternative (we observe that, since $\lll(\{c,a_i\})=n-f_i+1$ for $i\in[q]$, using the alternative $a_i$ as a competing alternative for any $i$ will lead to MR-plurality score of $d$ at most $2f_i<t-\nfrac{q}{6}$). Hence the MR-plurality score of $d$ is at most $t-\nfrac{q}{6}$. However the MR-plurality score of $c$ is at least $t+1-(\nfrac{q}{6}+1)=t-\nfrac{q}{6}$. This contradicts our assumption that $c$ is the unique MR-plurality winner of \QQ. Hence $\{S_j: j\in J\}$ forms a set cover of \UU and thus the \XTC is a \YES instance. This concludes the proof of the theorem.
 
 For the $2$-approval voting rule, we can introduce $n$ dummy alternatives each of which appears at the first position in exactly one preference and in the rest $(n-1)$ preferences, it appears in the bottom $(n-1)$ positions. All other parameters of the reduction remain same. It is easy to see that a similar argument will prove the result for the $2$-approval voting rule.
\end{proof}

We now show our hardness result for the $k$-veto voting rule by reducing from \XTC.
\begin{theorem}\label{thm:secpl_veto}
 The \SECPL problem is \NPC for the $k$-veto voting rule for every $1\le k\le \gamma m$ for any constant $0<\gamma<1$.
\end{theorem}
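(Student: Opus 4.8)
The \SECPL problem for the $k$-veto voting rule belongs to \NP: given a candidate partial profile \QQ we can check in polynomial time that $\suc\in\CC(\QQ)$, that $\lll(\{a,b\})\le\ppp_{\{a,b\}}(\QQ)$ for every pair, and that $c$ is the unique minimax regret winner, since $k$-veto is worst efficient and the minimax regret score of every alternative under a scoring rule is polynomial-time computable. For hardness I would reduce from \XTC, mirroring the construction of \Cref{thm:secpl_plurality_2app} but \emph{relocating the entire gadget to the bottom $k$ positions}, because under $k$-veto an alternative is penalized exactly when a completion pushes it into the bottom $k$ slots. The design principle is that leaving $c$ incomparable with the element-alternatives of a set $S_i$ corresponds to ``selecting'' $S_i$, and the function \lll is tuned so that $c$ can stay out of every voter's veto zone only if the selected sets cover \UU, while capping the number of selections at $\nfrac{q}{3}$.

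Concretely, given an \XTC instance $(\UU=\{u_1,\dots,u_q\},\SS=\{S_i:i\in[t]\})$ I would take alternatives $\{a_i:i\in[q]\}$ for the universe, a target $c$, a competitor $d$, and two blocks of \emph{dummies}: a block of \emph{bottom-anchors} $z_1,\dots,z_{k-1}$ pinned below every other alternative, and a block of \emph{top-fillers} placed above every other alternative. The anchors absorb $k-1$ of the $k$ veto slots in every preference, so that exactly one non-anchor alternative is vetoed per voter; this localizes the per-voter logic to a single ``active'' veto slot and effectively reduces the analysis to the veto case handled at the bottom. The fillers are never vetoed and exist only to realize the parameter regime: since $k\le\gamma m$ leaves at least $(1-\gamma)m$ non-veto positions, padding with enough top-fillers makes $m\ge k+q+O(1)$ consistent, so any $k=k(m)\le\gamma m$ is achievable while $m=\mathrm{poly}(q)$. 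For each set $S_i$ I would build a preference keeping the alternatives of $S_i$ just above the active veto slot with $c$ adjacent to them, together with $\GG_2$-style preferences revealing $c$ high and a single $\GG_3$-style preference giving $d$ its edge. The constraints \lll would force $c$ to be comparable with each $a_i$ in all but few preferences (so a cover is needed), and would bound the pair governing the active slot to limit the number of sets whose gadget can leave $c$ incomparable to $\nfrac{q}{3}$.

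For the forward direction, from an exact cover $S_1,\dots,S_{\nfrac{q}{3}}$ I would construct \QQ leaving $c$ incomparable with the covered element-alternatives in the covering preferences (so no completion can simultaneously veto $c$ and keep it below those alternatives) and revealing $c$ above the veto zone elsewhere; a score table in the style of \Cref{tbl:scores_secpl_plurality} would then show that the MR-$k$-veto score of $c$ is strictly smaller than that of $d$, of every $a_i$, and of every dummy, so $\sss(\QQ)=\{c\}$. For the backward direction, given any \QQ with $\suc\in\CC(\QQ)$ and $c$ the unique MR winner, I would let $J$ index the gadget preferences leaving $c$ incomparable with its set; the \lll budget forces $|J|\le\nfrac{q}{3}$, while if some $u_k$ were uncovered then the adversary could push $a_k$ into the active veto slot in the worst completion, giving $a_k$ a minimax regret at least that of $c$ via the same competing alternative and contradicting uniqueness. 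Hence $\{S_j:j\in J\}$ is an exact cover.

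The main obstacle is \emph{controlling the bottom-$k$ veto zone uniformly for all $k$ up to $\gamma m$}: because each voter now vetoes $k$ alternatives at once, I must guarantee that the anchors genuinely occupy $k-1$ of the veto slots in every completion that enters the MR computation, that the adversary's worst-case completion can force exactly the intended alternative into the one remaining slot, and that no unintended collateral veto shifts the relative scores. Simultaneously choosing the number of anchors and top-fillers so that $k\le\gamma m$ holds for the family's $k=k(m)$ while keeping $m$ polynomial, and then verifying each entry of the score table so that $c$'s minimax regret is \emph{strictly} below every competitor's, is the delicate part of the argument.
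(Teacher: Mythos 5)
Your high-level plan (reduce from \XTC, move the gadget to the bottom positions, and pass from veto to $k$-veto by adding $k-1$ dummies) matches the paper's, but the core of your gadget is different from the paper's and, as described, does not go through. You port the plurality mechanism --- ``$S_i$ is selected iff $c$ is left incomparable with the element-alternatives of $S_i$,'' with \lll constraining the pairs $\{c,a_i\}$ --- into the veto setting. Under veto this encoding is at best underspecified and in its natural reading self-defeating: leaving $c$ incomparable with alternatives adjacent to the bottom slot is exactly what permits the adversarial completion to veto $c$, which raises $c$'s own maximum regret instead of lowering it; and if you structure the incomparability so that $c$ can never be vetoed, it is unclear what the ``selection'' changes at all. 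The paper's veto reduction uses a different device: an auxiliary alternative $x$ sits below $c\suc S_i$ in each $\GG_1$ preference, \lll pins only the pairs containing $d$ (namely $\lll(\{d,z\})=n$ for every $z\ne d$), so the sole freedom left to the adversary in a $\GG_1$ preference is which member of $\{c\}\cup S_i\cup\{x\}$ lands in the veto slot, and ``selecting $S_i$'' means revealing enough of that preference to force $x$ to the bottom. The lower bound of $\nfrac{q}{3}$ on the number of selections is then extracted from a score comparison between $c$ and $x$ using the carefully sized groups $\GG_2$ and $\GG_3$, not from a cap written directly into \lll as you propose.

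Second, your backward direction has the contradiction inverted. You argue that if some $u_k$ were uncovered, the adversary could push $a_k$ into the veto slot, ``giving $a_k$ a minimax regret at least that of $c$ \ldots contradicting uniqueness.'' But every losing alternative having regret at least (indeed strictly greater than) that of $c$ is precisely what unique victory of $c$ means, so no contradiction follows; to refute uniqueness you must exhibit an alternative whose maximum regret is \emph{at most} that of $c$. The paper does this by considering an element $u_k$ covered \emph{twice} by the selected sets: such an $a_k$ is then guaranteed a score of $1$ in too many $\GG_1$ preferences, its maximum regret falls to at most $c$'s, and uniqueness fails; combined with $|J|\ge\nfrac{q}{3}$ and $|S_i|=3$ this forces an exact cover. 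Your sketch lacks both this counting and a correctly oriented contradiction, so the equivalence of the two instances is not established. (Your remarks on \NP membership and on padding so that $k\le\gamma m$ remains satisfiable are fine, and your worry about genuinely pinning the $k-1$ anchors in every completion is a legitimate one that the paper itself treats only tersely.)
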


\begin{proof}
 Let us first consider the veto voting rule. The \SECPL problem for the veto voting rule clearly belongs to \NP. To prove \NP-hardness, we reduce from \XTC to \SECPL for the veto voting rule. Let $(\UU=\{u_1, \ldots, u_q\}, \SS=\{S_i: i\in[t]\})$ be an arbitrary instance of \XTC. For every $i\in[q]$ let us define $f_i=|\{j\in[t]:u_i\in S_j\}|$. Let us assume, without loss of generality, that $f_i<t-\nfrac{q}{2}$ (if not, then we add $3t$ new elements in \UU and $t$ sets in \SS each of size $3$ and collectively covering these new $3t$ elements). Let us assume, without loss of generality, that $q$ is divisible by $6$; if not then we add $3$ new elements in \UU and a set consisting of these three new elements in \SS. We also assume that $t$ is an odd integer; if not then we duplicate one set in \SS. Let us consider the following instance $(\AA, \PP, c, \lll)$ of \SECPL where \AA is defined as follows.
 \[ \AA = \{a_i:i\in[q]\}\cup\{c, x, d, w_1, w_2\} \]
 
 The profile \PP consists of the following preferences. For $X\subseteq\UU$, let us also denote, for the sake of simplicity of notation, the set $\{a_j: u_j\in X\}$ of alternatives by $X$.
 \begin{enumerate}%[leftmargin=0cm,itemindent=0.3cm,labelwidth=\itemindent,labelsep=0cm,align=left,noitemsep,topsep=3pt]
  \item $\forall i\in[t]: (\UU\setminus S_i)\suc w_1\suc w_2\suc d\suc c\suc S_i\suc x$
  \item $\nfrac{(t-1)}{2}+\nfrac{q}{6}$ copies of $(\AA\setminus\{c,d\})\suc d\suc c$
  \item $\forall j\in[q]: \nfrac{(t+1)}{2}+\nfrac{q}{6}+1-f_j$ copies of $(\AA\setminus\{a_j,d\})\suc d\suc a_j$
  \item $10t$ copies of $(\AA\setminus\{d\})\suc d$
  \item $10t$ copies of $(\AA\setminus\{d,w_1,w_2\})\suc d\suc w_1\suc w_2$
  \item $\forall i\in[2]:\nfrac{q}{3}$ copies of $(\AA\setminus\{w_i,d\})\suc d\suc w_i$
 \end{enumerate}
 For ease of reference, we call the above six groups as $\GG_1,\GG_2,\GG_3,\GG_4,\GG_5$ and $\GG_6$ respectively. Let $n$ be the number of preferences in \PP. The function \lll is defined as follows: $\lll(\{d,z\})=n$ for every $z\in\AA\setminus\{d\}$; the value of \lll be $0$ for all other pairs of alternatives. This finishes the description of our reduced instance. We now claim that the two instances are equivalent.
 
 In one direction, let us assume that the \SCFT instance is a \YES instance; without loss of generality, let us assume (by renaming) that $S_1, \ldots, S_{\nfrac{q}{3}}$ forms a set cover of \UU. Let us consider the following partial profile \QQ with $\PP\in\LL(\QQ)$.
 
 \begin{itemize}
  \item Preferences in $\GG_1$: $\forall i\in[\nfrac{q}{3}]: (\UU\setminus S_i)\suc w_1\suc w_2\suc d\suc c\suc S_i\suc x$
  
  \item Preferences in $\GG_1$: $\forall i \text{ with } \nfrac{q}{3}+1\le i\le t: (\UU\setminus S_i)\suc w_1\suc w_2\suc d\suc c\suc (S_i\cup \{x\})$
  
  \item Preferences in $\GG_2$: $\nfrac{(t-1)}{2}+\nfrac{q}{6}$ copies of $(\AA\setminus\{c,d\})\suc d\suc c$
  
  \item Preferences in $\GG_3$: $\forall j\in[q]: \nfrac{(t+1)}{2}+\nfrac{q}{6}+1-f_j$ copies of $(\AA\setminus\{a_j,d\})\suc d\suc a_j$
  
  \item Preferences in $\GG_4$: $10t$ copies of $(\AA\setminus\{d,w_1,w_2\})\suc d\suc \{w_1, w_2\}$
  
  \item Preferences in $\GG_5$: $10t$ copies of $(\AA\setminus\{d\})\suc d$
  
  \item Preferences in $\GG_6$: $\forall i\in[2]:\nfrac{q}{3}$ copies of $(\AA\setminus\{w_i,d\})\suc d\suc w_i$
 \end{itemize}
 
 We summarize the MR-veto score of every alternative from \QQ in \Cref{tbl:scores_secpl_veto}. Hence the alternative $c$ wins uniquely in \QQ.
 \begin{table}[!htbp]
 \begin{center}
 \begin{tabular}{|c|c|c|c|}\hline
  
   \shortversion{Alternative & \makecell{MR-veto score\\ from \QQ} & \makecell{Competing\\ alternative}\\\hline\hline}
   \longversion{Alternative & MR-veto score from \QQ & Competing alternative\\\hline\hline}
  
%    Alternative & MR-veto score from \QQ & Competing alternative\\\hline\hline
   
   $c$ & $\nfrac{(t-1)}{2}-\nfrac{q}{6}$ & $x$ (or $w_1$ or $w_2$) \\\hline
   $x$ & $\nfrac{(t+1)}{2}-\nfrac{q}{6}$ & $c$ \\\hline
   $a_i, \forall i\in[q]$ & $\nfrac{(t+1)}{2}-\nfrac{q}{6}$ & $x$ (or $w_1$ or $w_2$)\\\hline
   $d$ & $\ge 5t$ & $x$ (or $w_1$ or $w_2$)\\\hline
   $w_1 (w_2)$ & $\ge 5t$& $w_2(w_1)$\\\hline
  \end{tabular}
  \caption{Summary of MR-veto scores of all the alternatives from the partial profile \QQ in the proof of \Cref{thm:secpl_veto}.}\label{tbl:scores_secpl_veto}
 \end{center}
 \end{table}
 
 In the other direction, let us assume that the \SECPL instance $(\AA, \PP, c, \lll)$ is a \YES instance. Let $\QQ$ be a partial profile such that $\PP\in\CC(\QQ)$ and the alternative $c$ wins uniquely in \QQ under the MR-veto voting rule. We first observe that, since $\lll(\{d,z\})=n$ for every $z\in\AA\setminus\{d\}$, the alternative $c$ is forced to receive a score of $0$ from every preference in \QQ corresponding to group $\GG_2$ (and thus every other alternative is forced to receive a score of $1$), both the alternatives $x$ and $c$ are forced to receive a score of $1$ from every preference in \QQ corresponding to group $\GG_3$, and the alternative $d$ is forced to receive a score of $0$ from every preference in \QQ corresponding to group $\GG_4$ (and thus every other alternative is forced to receive a score of $1$). We now claim that there must be at least $\nfrac{q}{3}$ preferences in \QQ corresponding to the group $\GG_1$ where the alternative $x$ must be forced to receive a score of $0$ for $c$ to win uniquely. Suppose not, then we observe that the MR-veto score of $c$ is at least $\nfrac{(t+1)}{2}-\nfrac{q}{6}$ using $x$ as a competing alternative whereas the MR-veto score of $x$ is at most $\nfrac{(t-1)}{2}-\nfrac{q}{6}$ using $c$ as a competing alternative which contradicts our assumption that $c$ wins the MR-veto election. Let $J\subseteq[t]$ be the set of $i\in[t]$ such that the corresponding partial preferences in the group $\GG_1$ in \QQ force the alternative $x$ to receive a score of $0$. Then we have $|J|\ge\nfrac{q}{3}$. We now claim that $|J|=\nfrac{q}{3}$ and the sets $\{S_j:j\in J\}$ forms an exact set cover of \UU. Suppose not, then there exists an $u_k\in\UU$ which appears in at least two sets in $\{S_j:j\in J\}$. Then the alternative is forced to receive a score of $1$ from at least $t-f_k+2$ preferences in \QQ corresponding to the group $\GG_1$ -- $t-f_k$ preferences where it appears on the left of $d$ and at least $2$ preferences where the alternative $x$ is forced to receive a score of $0$. However, then the MR-veto score of the alternative $a_k$ is at most $\nfrac{(t-1)}{2}-\nfrac{q}{6}$ using $w_1$ (or $w_2$) as a competing alternative. This contradicts our assumption that $c$ wins the MR-veto election uniquely. Hence the sets $\{S_j:j\in J\}$ forms an exact set cover of \UU and thus the \XTC is a \YES instance. This concludes the proof of the theorem for the veto voting rule.
 
 For the $k$-veto voting rule, we add $k-1$ ``dummy'' alternatives $d_i, i\in[k-1]$ at the bottom of every preferences in \PP and this is easily verifiable that the proof for the $k$-veto voting rule goes along the same line.
\end{proof}

% \shortversion{\subsection{Other Voting Rules}}
\longversion{\subsection{Maximin Voting Rule}}

We now show our hardness result for the maximin voting rule by reducing from \SCFT.
\begin{theorem}\label{thm:secpl_maximin}
 The \SECPL problem is \NPC for the maximin voting rule.
\end{theorem}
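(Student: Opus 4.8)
The plan is to reduce from \SCFT, reusing the overall template of \Cref{thm:secpl_sc} but replacing its score-vector gadget by one tailored to the pairwise nature of maximin. Membership in \NP is immediate: a partial profile \QQ serves as a polynomial-size certificate, and one checks in polynomial time that $\PP\in\CC(\QQ)$, that $\lll(\{a,b\})\le\ppp_{\{a,b\}}(\QQ)$ for every pair, and that $\sss(\QQ)=\{c\}$, the last because maximin is worst efficient and its MR score is polynomial-time computable.

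Given a \SCFT instance $(\UU=\{u_1,\ldots,u_q\},\SS=\{S_i:i\in[t]\},\el)$, I would take an alternative set consisting of the distinguished candidate $c$, one element-alternative $a_i$ per $u_i\in\UU$, a threshold alternative $d$, and a constant number of auxiliary alternatives whose pairwise margins against $c$ and the $a_i$ can be tuned so as to pin down which matchup realises each alternative's maximin score. I would build the complete profile \PP from a ``set group'' $\GG_1$ containing one preference per set $S_i$, in which $d$ sits just above $c$ and $c$ just above the elements of $S_i$, together with balancing groups that fix the remaining base margins $\NN_\PP(\cdot,\cdot)$ to prescribed target values. The comparability bound would be $\lll(\{d,c\})=n-\el$ (so at most $\el$ of the set-preferences may leave $c$ and $d$ incomparable), $\lll(\{d,z\})=n$ for every other $z$ (forcing $d$ comparable everywhere else), and $0$ on all remaining pairs.

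The design goal is that the \emph{only} lever that lowers the MR-maximin score of $c$ is leaving $c$ and $d$ incomparable in some $\GG_1$-preferences: even in the worst completion the margin between $c$ and $d$ cannot drop below a controlled value, so $c$'s regret stays small, whereas selecting set $S_j$ (i.e.\ leaving $c,d$ incomparable in the $j$-th $\GG_1$-preference) must also ``release'' the binding margin of every element-alternative $a_i$ with $u_i\in S_j$, preventing the adversary from driving $a_i$ down far enough to manufacture large regret. The frequency-two structure of \SCFT is exactly what lets me balance each $a_i$ against its two sets. For the forward direction I would take a cover $S_1,\ldots,S_\el$, leave $c,d$ incomparable precisely in those preferences (and resolve every other incomparability in the most robust direction), and exhibit a score table verifying $\sss(\QQ)=\{c\}$. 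For the backward direction, from a winning partial profile \QQ I would let $J$ be the set of indices where $\GG_1$ leaves $c,d$ incomparable, note $|J|\le\el$ from $\lll(\{d,c\})=n-\el$, and show that if some $u_k$ were uncovered by $\{S_j:j\in J\}$ then the adversarial completion would give $a_k$ an MR-maximin score at least that of $c$, contradicting uniqueness of $c$.

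The hardest part, and the step demanding the most care, is controlling the minimax-regret computation itself: unlike a scoring rule, the maximin score of an alternative is the minimum over all its pairwise matchups, and the adversary realising the regret may simultaneously depress the target alternative's worst matchup and inflate a competitor's score. I would neutralise this by making every ``irrelevant'' margin large and rigid --- via the balancing groups and the $\lll(\{d,z\})=n$ constraints --- so that for each alternative a single designated matchup is binding and the adversary's only genuine freedom lies in the $c$-versus-$d$ comparisons. This collapses the MR-maximin analysis to bookkeeping on one margin per alternative, playing the role that the single score-coordinate gap played in \Cref{thm:secpl_sc}.
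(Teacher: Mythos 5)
Your high-level skeleton matches the paper's: both reduce from \SCFT, use an alternative set $\{a_i: i\in[q]\}\cup\{c,d\}$ plus two auxiliaries, the identical comparability bounds $\lll(\{d,c\})=n-\el$ and $\lll(\{d,z\})=n$ for $z\neq c$, and the same $J$-based two-direction argument. But there are two genuine problems. First, you have the direction of the regret minimization inverted in the two places where it carries the proof. Since $\sss(\QQ)=\argmin_a s\text{-}MR(a,\QQ)$, the backward-direction contradiction must be that an uncovered $u_k$ gives $a_k$ an MR score \emph{at most} that of $c$, so that $a_k$ ties or beats $c$ in the minimization; you claim it would be ``at least that of $c$, contradicting uniqueness,'' which is no contradiction at all. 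Consistently with that slip, your coupling mechanism points the wrong way: leaving $c$ and $d$ incomparable in the preference for $S_j$ must \emph{enable} the adversary to depress the score of each $a_i$ with $u_i\in S_j$ and thereby \emph{inflate} $a_i$'s regret strictly above $c$'s (that is what removes $a_i$ from contention), whereas you say it ``prevents the adversary from driving $a_i$ down far enough to manufacture large regret.'' In the paper it is exactly the \emph{uncovered} elements that retain a rigid, small regret equal to $c$'s and thereby block the unique win.

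Second, the one concrete gadget fragment you commit to --- $d$ just above $c$ and $c$ just above the elements of $S_i$ --- destroys the only forcing mechanism available. Because $\lll$ is zero on every pair not involving $d$, a comparison not involving $d$ can be rigid in $\QQ$ only via transitivity through $d$. The paper therefore places $S_i$ \emph{above} $d\suc c$ in the $i$-th set-preference, so that the forced $a_j\suc d$ (for $u_j\in S_i$) chains with $d\suc c$ to force $a_j\suc c$, and severing the $d$-versus-$c$ link in that preference is precisely what releases $a_j$. With your ordering $d\suc c\suc S_i$, the forced relations $d\suc a_j$ and $d\suc c$ imply nothing about $a_j$ versus $c$, so selecting a set has no effect on any element-alternative's margins and the reduction carries no information. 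Relatedly, your promise that unspecified ``balancing groups'' make every irrelevant margin (for instance $a_k$ versus $a_j$, or versus the auxiliaries) ``large and rigid'' is the actual technical content of the construction, not a deferrable detail: with $\lll$ supported only on pairs containing $d$, such a margin is rigid in a given preference only when the two alternatives sit on opposite sides of $d$ there, and you would need to exhibit a profile arranging this for all the relevant pairs simultaneously.
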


\begin{proof}
 The \SECPL problem clearly belongs to \NP. To prove \NP-hardness, we reduce from \SCFT to \SECPL for the maximin voting rule.  Let $(\UU=\{u_1, \ldots, u_q\}, \SS=\{S_i: i\in[t]\},\el)$ be an arbitrary instance of \SCFT. Let us consider the following instance $(\AA, \PP, c, \lll)$ of \SECPL where \AA is defined as follows.
 \[ \AA = \{a_i:i\in[q]\}\cup\{c, w_1, w_2, d\} \]
 
 The profile \PP consists of the following preferences. For $X\subseteq\UU$, let us also denote, for the sake of simplicity of notation, the set $\{a_j: u_j\in X\}$ of alternatives by $X$.
 \begin{itemize}
  \item $\forall i\in[t]: w_1\suc w_2\suc S_i \suc d\suc c\suc (\UU\setminus S_i)$
  \item $2$ copies of $c\suc \UU\suc d\suc w_1\suc w_2$
 \end{itemize}
 For ease of reference, we call the above two groups as $\GG_1$ and $\GG_2$ respectively. Let $n$ be the number of preferences in \PP. The function \lll is defined as follows: $\lll(\{d,x\})=n$ for every $x\in\AA\setminus(\{c, d\}), \lll(\{d,c\})=n-\el$; the value of \lll be $0$ for all other pairs of alternatives. This finishes the description of our reduced instance. We now claim that the two instances are equivalent.
 
 In one direction, let us assume that the \SCFT instance is a \YES instance; without loss of generality, let us assume (by renaming) that $S_1, \ldots, S_\el$ forms a set cover of \UU. Let us consider the following partial profile \QQ with $\PP\in\LL(\QQ)$.
 
 \begin{itemize}
  \item Preferences in $\GG_1$: $\forall i\in[\el]: ((\{w_1,w_2\}\cup S_i)\suc d\suc \{\UU\setminus S_i\})\bigcup (c\suc \{\UU\setminus S_i\}) $
  
  \item Preferences in $\GG_1$: $\forall i \text{ with } \el+1\le i\le t: (\{w_1,w_2\}\cup S_i)\suc d\suc c\suc \{\UU\setminus S_i\}$
  
  \item Preferences in $\GG_2$: $2$ copies of $c\suc \UU \suc d\suc \{w_1,w_2\}$
 \end{itemize}
 
 We summarize the MR-maximin score of every alternative from \QQ in \Cref{tbl:scores_secpl_maximin}. Hence the alternative $c$ wins uniquely in \QQ.
 
 \longversion{
 \begin{table}[!htbp]
 \begin{center}
 \begin{tabular}{|c|c|c|c|}\hline
  
   Alternative & MR-maximin from \QQ & Competing alternative & Comments\\\hline\hline
   
   $c$ & $t-2$ & $w_1$ (or $w_2$)& $N(w_1,c)-N(c,d)$ \\\hline
   $a_i, \forall i\in[q]$ &$t-1$ & $w_1$ (or $w_2$)& $N(w_1,c)-N(a_i,c)$\\\hline
   $w_1 (w_2)$ &$t$& $w_2 (w_1)$& $N(w_1,c)-N(w_2,w_1)$ \\\hline
   $d$ &$t$ & $w_1$ (or $w_2$)& $N(w_1,c)-N(d,w_1)$\\\hline
  \end{tabular}
  \caption{Summary of MR-maximin scores of all the alternatives from the partial profile \QQ in the proof of \Cref{thm:secpl_maximin}.}\label{tbl:scores_secpl_maximin}
 \end{center}
 \end{table}
 }
 \shortversion{
 \begin{table}[!htbp]
 \begin{center}
 \begin{tabular}{|c|c|c|c|}\hline
  
   Alternative & \makecell{MR-maximin\\ score from \QQ} & Comments\\\hline\hline
   
   $c$ & $t-2$ & $N(w_1,c)-N(c,d)$ \\\hline
   $a_i, \forall i\in[q]$ &$t-1$ & $N(w_1,c)-N(a_i,c)$\\\hline
   $w_1 (w_2)$ &$t$&  $N(w_1,c)-N(w_2,w_1)$ \\\hline
   $d$ &$t$ &  $N(w_1,c)-N(d,w_1)$\\\hline
  \end{tabular}
  \caption{Summary of MR-maximin scores of all the alternatives from the partial profile \QQ in the proof of \Cref{thm:secpl_maximin}.}\label{tbl:scores_secpl_maximin}
 \end{center}
 \end{table}
 }
 
 In the other direction, let us assume that the \SECPL instance $(\AA, \PP, c, \lll)$ is a \YES instance. Let $\QQ$ be a partial profile such that $\PP\in\CC(\QQ)$ and the alternative $c$ wins uniquely in \QQ under the MR-maximin voting rule. We observe that for every $\RR\in\LL(\QQ)$ which can be used for calculating the MR-maximin score of the alternative $c$, we have $N_\RR(c,d)\le 2$. Also, there are only two preferences (the preferences in $\GG_2$) where there exist some alternatives which are preferred over the alternative $w_1$. Hence the MR-maximin score of the alternative $c$ in \QQ is at least $t-2$. Let $J\subseteq[t]$ be the set of $i\in[t]$ such that the corresponding partial preferences in the group $\GG_1$ in \QQ leave the alternatives $c$ and $d$ incomparable. Since $\lll(\{d,c\})=n-\el$, we have $|J|\le\el$. We claim that $\{S_j: j\in J\}$ forms a set cover of \UU. Suppose not, then let $u_k\in\UU\setminus(\cup_{j\in J} S_j)$. We observe that for every $\RR^\pr\in\LL(\QQ)$, we have $N_{\RR^\pr}(a_k,c)=2$. We also observe that $N_{\RR^\pr}(a_k,d)=N_{\RR^\pr}(a_k,w_i)=2$ for every $i\in[2]$ since $\lll(\{d,u_i\})=n$ for every $i\in[q]$ and $\lll(\{d,w_1\}) = \lll(\{d,w_2\})=n$. Hence, the MR-maximin score of the alternative $a_k$ is $t-2$ where the alternative $w_1$ plays the role of a competing alternative. However, this contradicts our assumption that $c$ is the unique MR-maximin winner of \QQ. Hence $\{S_j: j\in J\}$ forms a set cover of \UU and thus the \SCFT is a \YES instance. This concludes the proof of the theorem.
\end{proof}

\longversion{\subsection{Copeland$^\alpha$ Voting Rule}}

\longversion{We show next our hardness result for the Copeland$^\alpha$ voting rule for every $\alpha\in[0,1]$.}
\shortversion{For the Copeland$^\alpha$ voting rule, we have the following result for every $\alpha\in[0,1]$.}
\begin{theorem}\label{thm:secpl_copeland}
 The \SECPL problem is \NPC for the Copeland$^\alpha$ voting rule for every $\alpha\in[0,1]$.
\end{theorem}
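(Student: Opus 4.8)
The plan is to follow the same template as \Cref{thm:secpl_maximin}, reducing from \SCFT. Membership in \NP is immediate: a partial profile \QQ witnessing a \YES instance has polynomial size, and checking $\PP\in\CC(\QQ)$, the comparability lower bounds given by $\lll$, and that $c$ is the unique minimax-regret Copeland$^\alpha$ winner of \QQ can all be done in polynomial time, since Copeland$^\alpha$ is worst efficient and the relevant MR scores are polynomial-time computable. For hardness, starting from an instance $(\UU,\SS,\el)$ of \SCFT I would build an instance $(\AA,\PP,c,\lll)$ with $\AA=\{a_i:i\in[q]\}\cup\{c,d\}\cup\{w_1,w_2\}$ (adding a constant number of padding alternatives if parity forces it), one alternative $a_i$ per universe element. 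The profile \PP would consist of a group $\GG_1$ with one preference per set $S_i$ of the form $w_1\suc w_2\suc S_i\suc d\suc c\suc(\UU\setminus S_i)$, together with a constant number of ``anchor'' preferences (a group $\GG_2$) placing $c$ at the top and $d$ just above $w_1,w_2$, exactly as in the maximin construction. The comparability bounds would be $\lll(\{d,c\})=n-\el$, $\lll(\{d,x\})=n$ for every $x\in\AA\setminus\{c,d\}$, and $0$ on every remaining pair; thus $d$ must remain comparable with everybody, except that $c$ and $d$ may be left incomparable in at most $\el$ preferences.

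The forward direction would take a size-$\el$ cover $S_1,\dots,S_\el$ and elicit \QQ by leaving $c$ and $d$ incomparable precisely in the $\GG_1$-preferences indexed by the cover, keeping every other pair as in \PP. Because the only freedom in any completion $\RR\in\CC(\QQ)$ is whether $c$ is raised above $d$ in those $\el$ preferences, the worst-case completion for $c$ can push only a bounded number of head-to-head margins in $c$'s disfavour; I would compute the resulting majority graph and present the MR-Copeland$^\alpha$ scores of $c$, each $a_i$, $d$, $w_1$, $w_2$ in a table mirroring \Cref{tbl:scores_secpl_maximin}, and read off that $c$ is the unique minimum-regret alternative. The backward direction would, given any \QQ making $c$ the unique MR-Copeland$^\alpha$ winner, let $J\subseteq[t]$ be the indices of $\GG_1$-preferences where $c$ and $d$ are left incomparable; the bound $\lll(\{d,c\})=n-\el$ forces $|J|\le\el$, and I would argue that $\{S_j:j\in J\}$ covers \UU, for otherwise the completion that keeps every freed $c$ below $d$ makes the regret of the alternative $a_k$ corresponding to an uncovered $u_k$ at least as large as that of $c$ via the same competing alternative, contradicting uniqueness. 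This yields a cover of size at most $\el$, establishing equivalence.

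The main obstacle, and the feature distinguishing this from the maximin argument, is that the Copeland$^\alpha$ score is a combinatorial count over the \emph{entire} weighted majority tournament rather than a single pairwise quantity, so I must control the sign of every relevant margin $\DD_\EE(\cdot,\cdot)$, not just one. In particular the reduction has to work \emph{uniformly for all} $\alpha\in[0,1]$: since a tied pair contributes $\alpha$ to the Copeland$^\alpha$ score, I would fix the number of voters so that every decisive comparison---those whose flip is controlled by freeing $c$ and $d$, and those separating $c$ and the $a_i$ from $w_1,w_2,d$---has a strictly nonzero, odd margin, making its contribution independent of $\alpha$; the only ties I would permit are ones common to $c$ and to the competing alternative, so that they cancel in the regret comparison. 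Getting this margin bookkeeping right---so that freeing $c,d$ on a cover changes exactly the intended Copeland victories of $c$, while an uncovered element flips exactly one victory of the corresponding $a_k$---is the delicate step; once the majority graph is pinned down, the two-directional set-cover equivalence follows as in the preceding theorems.
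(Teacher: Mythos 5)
Your skeleton (reduce from \SCFT, free the pair $\{c,d\}$ in at most \el preferences of the set-gadget group via $\lll(\{d,c\})=n-\el$) matches the paper, and you correctly name the crux --- one must control every pairwise margin, not a single one --- but the construction you commit to does not achieve this, and the failure is structural rather than ``bookkeeping.'' If the second group contains only a \emph{constant} number of anchor preferences (as in \Cref{thm:secpl_maximin}, where it has $2$), then every margin decided by the first group is of order $t$: in every completion $c$ defeats every $a_i$ head-to-head (since $c\suc(\UU\setminus S_i)$ in $t-2$ preferences versus $a_i\suc c$ in only $2$), while every $a_i$ loses to $d$, $w_1$ and $w_2$ outright. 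Freeing $c$ versus $d$ in \el preferences perturbs only the single edge $(c,d)$ of the weighted majority graph; whether the freed sets cover \UU never enters any Copeland$^\alpha$ score. So $c$'s worst-case Copeland score is $q$ while each $a_i$ can be driven to score $0$, $c$ wins uniquely for every input, and the backward direction collapses. The maximin argument tolerates a constant-size anchor group precisely because the maximin score is $\min_y\NN_\EE(\cdot,y)$, a bottleneck quantity pinned at $2$ by those two preferences; Copeland$^\alpha$ has no such bottleneck.

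The paper's construction differs in exactly the ways needed to repair this. The second group has $t-3$ copies of $\UU\suc c\suc x\suc y\suc d\suc w_1\suc w_2$, so the critical comparisons ($a_i$ versus $c$, and the comparisons among $c$, $x$, $y$, $d$) are decided by margin exactly $1$ and a single freed preference can flip them; two auxiliary alternatives $x$ and $y$ are added that $c$ defeats in every completion, calibrating $c$'s guaranteed score of $2$ (regret $|\AA|-3$) against an uncovered $a_k$, which defeats $c$ and $d$ in every completion and hence also has regret at most $|\AA|-3$, contradicting uniqueness; the bound $n-\el$ is imposed on the pair $\{d,x\}$ as well as $\{d,c\}$; and $n=2t-3$ is odd, so no pairwise tie can occur and $\alpha$ is genuinely irrelevant. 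Your alternative tie-handling (``only ties common to $c$ and the competing alternative'') does not suffice, since a tie between two other alternatives still shifts their scores and hence their regrets relative to $c$. To make your proof go through you would need to redesign the profile along these lines; it cannot be read off from the maximin gadget as written.
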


\longversion{
\begin{proof}
 The \SECPL problem clearly belongs to \NP. To prove \NP-hardness, we reduce from \SCFT to \SECPL for the Copeland$^\alpha$ voting rule for every $\alpha\in[0,1]$.  Let $(\UU=\{u_1, \ldots, u_q\}, \SS=\{S_i: i\in[t]\},\el)$ be an arbitrary instance of \SCFT. Let us consider the following instance $(\AA, \PP, c, \lll)$ of \SECPL where \AA is defined as follows.
 \[ \AA = \{a_i:i\in[q]\}\cup\{c, x, y, d, w_1, w_2\}\} \]
 
 The profile \PP consists of the following preferences. For $X\subseteq\UU$, let us also denote, for the sake of simplicity of notation, the set $\{a_j: u_j\in X\}$ of alternatives by $X$.
 \begin{itemize}
  \item $\forall i\in[t]: w_1\suc w_2\suc S_i \suc d\suc c\suc x\suc y\suc (\UU\setminus S_i)$
  \item $t-3$ copies of $\UU\suc c\suc x\suc y\suc d\suc w_1\suc w_2$
 \end{itemize}
 For ease of reference, we call the above two groups as $\GG_1$ and $\GG_2$ respectively. Let $n$ be the number of preferences in \PP. We observe that $n$ is an odd integer and thus the value of $\alpha$ is irrelevant. Hence, from here on, we omit the parameter $\alpha$. The function \lll is defined as follows: $\lll(\{d,x\})=n$ for every $x\in\AA\setminus\{c,d,x,w_1,w_2\}, \lll(\{d,c\})=\lll(\{d,x\})=n-\el$; the value of \lll be $0$ for all other pairs of alternatives. This finishes the description of our reduced instance. We now claim that the two instances are equivalent.
 
 In one direction, let us assume that the \SCFT instance is a \YES instance; without loss of generality, let us assume (by renaming) that $S_1, \ldots, S_\el$ forms a set cover of \UU. Let us consider the following partial profile \QQ.
 
 \begin{itemize}
  \item Preferences in $\GG_1$: $\forall i\in[\el]: S_i\suc d\suc y\suc (\UU\setminus S_i)) \bigcup (c\suc \{\{x,y\}\cup (\UU\setminus S_i)\})$
  
  \item Preferences in $\GG_1$: $\forall i \text{ with } \el+1\le i\le t: S_i\suc d\suc x\suc y\suc (\UU\setminus S_i)$
  \item Preferences in $\GG_2$: $t-3$ copies of $c\suc \{x, d, w_1, w_2\}$
 \end{itemize}
 
 We summarize the MR-Copeland score of every alternative from \QQ in \Cref{tbl:scores_secpl_copeland}. Hence the alternative $c$ wins uniquely in \QQ.
 \begin{table*}[!htbp]
 \begin{center}
 \begin{tabular}{|c|c|c|c|}\hline
  
   Alternative & MR-Copeland from \QQ & Competing alternative & Comments\\\hline\hline
   
   $c$ & $|\AA|-3$ & $w_1$ (or $w_2$)& \makecell{$w_1$ defeats $\AA\setminus\{w_1\}$\\$c$ defeats $\{x,y\}$} \\\hline
   $a_i, \forall i\in[q]$ &$|\AA|-2$ & $w_1$ (or $w_2$)& \makecell{$w_1$ defeats $\AA\setminus\{w_1\}$\\$a_i$ defeats $d$}\\\hline
   $w_1 (w_2)$ &$|\AA|-1$& $w_2 (w_1)$& \makecell{$w_1$ defeats $\AA\setminus\{w_1\}$\\$w_2$ defeats none} \\\hline
   $d$ &$|\AA|-2$& $w_2 (w_1)$& \makecell{$w_1$ defeats $\AA\setminus\{w_1\}$\\$d$ defeats $y$}\\\hline
  \end{tabular}
  \caption{Summary of MR-Copeland scores of all the alternatives from the partial profile \QQ in the proof of \Cref{thm:secpl_copeland}.}\label{tbl:scores_secpl_copeland}
 \end{center}
 \end{table*}
 
 In the other direction, let us assume that the \SECPL instance $(\AA, \PP, c, \lll)$ is a \YES instance. Let $\QQ$ be a partial profile such that $\PP\in\CC(\QQ)$ and the alternative $c$ wins uniquely in \QQ under the MR-Copeland voting rule. We observe that in \PP, the alternative $c$ defeats only $2$ alternatives namely $x$ and $y$. Since, $\PP\in\LL(\QQ)$, the MR-Copeland score of the alternative $c$ in \QQ is at least $(|\AA|-1)-2=|\AA|-3$. Let $J\subseteq[t]$ be the set of $i\in[t]$ such that the corresponding partial preferences in \QQ leave the alternatives $c$ and $d$ incomparable. Since $\lll(\{d,c\})=n-\el$ for every $i\in[t]$, we have $|J|\le\el$. We claim that $\{S_j: j\in J\}$ forms a set cover of \UU. Suppose not, then let $u_k\in\UU\setminus(\cup_{j\in J} S_j)$. We observe that in every profile $\RR\in\LL(\QQ)$, the alternative $a_k$ defeats the alternatives $c$ and $d$ and thus the MR-Copeland score of the alternative $a_k$ in \QQ is at most $(|\AA|-3)$ where the alternative $w_1$ plays the role of a competing alternative.  However, this contradicts our assumption that $c$ is the unique MR-Copeland winner of \QQ. Hence $\{S_j: j\in J\}$ forms a set cover of \UU and thus the \SCFT is a \YES instance. This concludes the proof of the theorem.
\end{proof}
}

\longversion{\subsection{Simplified Bucklin Voting Rule}}

\longversion{We now show our hardness result for the simplified Bucklin voting rule.}
\shortversion{For the simplified Bucklin voting rule, we have the following result.}

\begin{theorem}\label{thm:secpl_bucklin}
 The \SECPL problem is \NPC for the simplified Bucklin voting rule.
\end{theorem}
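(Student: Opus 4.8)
The plan is to establish \NPC-ness in the by-now familiar two steps, closely mirroring the reductions already carried out for maximin (\Cref{thm:secpl_maximin}) and Copeland$^\alpha$ (\Cref{thm:secpl_copeland}). Membership in \NP is immediate: a partial profile \QQ is a polynomially sized certificate, and since the simplified Bucklin rule is worst efficient we can compute $s\text{-}MR(a,\QQ)$ for every $a\in\AA$ and check both that $\sss(\QQ)=\{c\}$ and that $\ppp_{\{a,b\}}(\QQ)\ge\lll(\{a,b\})$ for every pair in polynomial time. So the whole effort is the hardness reduction, and I would reduce from \SCFT (rather than \XTC), because the selection budget is most naturally encoded by a single constraint $\lll(\{d,c\})=n-\el$, exactly as in the maximin and Copeland$^\alpha$ proofs.

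For the construction I would take $\AA=\{a_i:i\in[q]\}\cup\{c,d,w_1,w_2\}$, one \emph{element} alternative per universe element, the target $c$, a \emph{pusher} $d$, and two dummies $w_1,w_2$ to play the role of competing alternatives. The profile \PP would again split into a group $\GG_1$ with one preference per set $S_i$ (having $d$ placed at a fixed critical position, the members of $S_i$ just above it, and $c$ just below it) and a fixed group $\GG_2$ of padding preferences that pins down the majority threshold. The function \lll would be set to $\lll(\{d,c\})=n-\el$, $\lll(\{d,z\})=n$ for every $z\in\AA\setminus\{c,d\}$, and $0$ for all remaining pairs. The point of forcing $d$ comparable with everyone except $c$ is to fix, in every completion, the position of $d$ and the side of $d$ on which each alternative sits; the only genuinely free choice left to the manipulator is whether to reveal the pair $\{c,d\}$ in a given $\GG_1$ preference, and the budget $n-\el$ caps the number of preferences left incomparable by $\el$ --- this is the set-selection variable, with $J\subseteq[t]$ denoting the indices where $c,d$ are left incomparable, so $|J|\le\el$.

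The arithmetic is arranged so that an alternative's simplified Bucklin level in a completion is governed by whether it can be pushed below $d$'s critical position in enough preferences. In the forward direction, given a cover $S_1,\dots,S_\el$, I would leave $\{c,d\}$ incomparable precisely in the preferences of the chosen sets; this keeps $c$ above $d$'s level in enough preferences that its worst-case Bucklin level is small, while every $a_k$ (being covered) can still be driven below the threshold, and $w_1,w_2$ serve as the competing alternatives that remain high in $c$'s worst completion. I would summarise the resulting MR-Bucklin levels and competing alternatives in a table exactly as in \Cref{tbl:scores_secpl_maximin}, reading off that $c$ is the unique minimiser. In the reverse direction, from any \YES witness \QQ I would set $J$ to the incomparable indices ($|J|\le\el$) and argue that if $\{S_j:j\in J\}$ missed some $u_k$, then the adversary could push $a_k$ to the same worst-case level as $c$ with the same competing alternative, tying $c$ and contradicting unique victory; hence $J$ covers \UU.

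The main obstacle, and what makes this genuinely different from the additive (scoring/maximin/Copeland) cases, is the threshold, non-additive nature of the Bucklin level together with the coupling between $c$'s worst completion and its competing alternative. Concretely, I must choose $d$'s position and the number of padding preferences in $\GG_2$ so that the majority threshold $\lceil n/2\rceil$ falls in the narrow gap separating the ``covered'' count from the ``uncovered'' count at a single level, so that crossing it is exactly equivalent to covering every element; and I must ensure that $w_1,w_2$ are placeable high in the one completion that is worst for $c$ (to serve as competing alternatives and keep $s\text{-}MR(c,\QQ)$ small) while still being individually pushable low (so that they are not themselves spuriously better than $c$). Getting these counts to align --- rather than any single step --- is where the real work lies, after which the equivalence of the two instances follows as in the earlier proofs.
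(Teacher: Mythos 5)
Your high-level plan coincides with the paper's: both reduce from \SCFT, use a pusher alternative $d$ with $\lll(\{d,c\})=n-\el$ as the selection budget, take $J\subseteq[t]$ to be the indices of the $\GG_1$-preferences where $\{c,d\}$ is left incomparable (so $|J|\le\el$), and argue in the reverse direction that an uncovered element $u_k$ would let $a_k$ match $c$'s maximum regret, contradicting unique victory. The problem is that what you have written is a plan rather than a proof. You explicitly defer ``getting the counts to align'' --- the position of $d$, the sizes of the padding groups, and where the majority threshold $\lceil n/2\rceil$ falls --- but for simplified Bucklin this arithmetic \emph{is} the content of the theorem: nothing in your write-up certifies that a consistent choice of these parameters exists, and the equivalence of the two instances cannot ``follow as in the earlier proofs'' because, as you yourself note, the Bucklin level is a threshold quantity rather than an additive score.

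Moreover, the obstacle you flag as the main difficulty --- the coupling between $c$'s worst completion and its competing alternative --- is resolved in the paper by a concrete device that your gadget lacks. The paper uses $q$ dummies $W=\{w_1,\dots,w_q\}$, places $w_1$ first in $t+2$ of the $n=2t+1$ preferences of $\PP$, and deliberately \emph{exempts} $w_1,\dots,w_{q-2}$ from the comparability constraints with $d$ (it sets $\lll(\{d,x\})=n$ only for $x\notin\{c,d,w_1,\dots,w_{q-2}\}$). Since $\PP\in\CC(\QQ)$ always, this makes $w_1$ an alternative of simplified Bucklin score $1$ in the completion $\PP$, giving a uniform lower bound on every alternative's maximum regret with a \emph{fixed} competing alternative and thereby decoupling the regret computation. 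The prefix $w_1\succ\cdots\succ w_{q-2}$ in each $\GG_1$ preference is also what pushes $d$ and $c$ down to positions near $q$, so that hiding the pair $\{c,d\}$ in the $\el$ chosen preferences shifts $c$'s worst-case Bucklin level across the threshold by exactly one. With only two dummies and $\lll(\{d,z\})=n$ for all $z\ne c$, it is not clear you can simultaneously keep $w_1,w_2$ placeable at the top of $c$'s worst completion and prevent them from being spuriously better than $c$; supplying and verifying such a construction is the missing substance of the proof.
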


\longversion{
\begin{proof}
 The \SECPL problem clearly belongs to \NP. To prove \NP-hardness, we reduce from \SCFT to \SECPL for the simplified Bucklin voting rule.  Let $(\UU=\{u_1, \ldots, u_q\}, \SS=\{S_i: i\in[t]\},\el)$ be an arbitrary instance of \SCFT. Let us consider the following instance $(\AA, \PP, c, \lll)$ of \SECPL where \AA is defined as follows.
 \[ \AA = \{a_i:i\in[q]\}\cup\{c, d\} \cup W, \text{ where } W=\{w_i: i\in[q]\} \]
 
 The profile \PP consists of the following preferences. For $X\subseteq\UU$, let us also denote, for the sake of simplicity of notation, the set $\{a_j: u_j\in X\}$ of alternatives by $X$.
 \begin{itemize}
  \item $\forall i\in[t]: w_1\suc w_2 \suc \cdots \suc w_{q-2}\suc S_i \suc d\suc c\suc (\UU\setminus S_i)\suc w_{q-1}\suc w_q$
  \item $t-1$ copies of $\UU\suc c\suc d\suc W$
  \item $2$ copies of $w_1\suc w_2 \suc \cdots \suc w_q\suc c\suc d\suc\UU$
 \end{itemize}
 For ease of reference, we call the above three groups as $\GG_1,\GG_2,$ and $\GG_3$ respectively. Let $n$ be the number of preferences in \PP. The function \lll is defined as follows: $\lll(\{d,x\})=n$ for every $x\in\AA\setminus(\{c, d, w_1, \ldots, w_{q-2}\}), \lll(\{d,c\})=n-\el$; the value of \lll be $0$ for all other pairs of alternatives. This finishes the description of our reduced instance. We now claim that the two instances are equivalent.
 
 In one direction, let us assume that the \SCFT instance is a \YES instance; without loss of generality, let us assume (by renaming) that $S_1, \ldots, S_\el$ forms a set cover of \UU. Let us consider the following partial profile \QQ with $\PP\in\LL(\QQ)$.
 
 \begin{itemize}
  \item Preferences in $\GG_1$: $\forall i\in[\el]: (S_i \suc d\suc (\UU\setminus S_i)\suc w_{q-1}\suc w_q)\bigcup\left(c\suc (\UU\setminus S_i)\suc w_{q-1}\suc w_q\right)$
  
  \item Preferences in $\GG_1$: $\forall i \text{ with } \el+1\le i\le t: S_i \suc d\suc c\suc (\UU\setminus S_i)\suc w_{q-1}\suc w_q$
  
  \item Preferences in $\GG_2$: $t-1$ copies of $\UU\suc c\suc d\suc W$
  
  \item Preferences in $\GG_3$: $2$ copies of $w_{q-1}\suc w_q \suc c\suc d \suc \UU$
 \end{itemize}
 
 We summarize the MR-simplified Bucklin score of every alternative from \QQ in \Cref{tbl:scores_secpl_bucklin}. Hence the alternative $c$ wins uniquely in \QQ.
 \begin{table}[!htbp]
 \begin{center}
 \begin{tabular}{|c|c|c|c|}\hline
  
   \shortversion{Alternative & \makecell{MR-simplified Bucklin\\ score from \QQ} & \makecell{Competing\\ alternative}\\\hline\hline}
   \longversion{Alternative & MR-simplified Bucklin score from \QQ & Competing alternative\\\hline\hline}
%    Alternative & MR-simplified Bucklin from \QQ & Competing alternative\\\hline\hline
   
   $c$ & $-q$ & $w_1$ \\\hline
   $a_i, \forall i\in[q]$ &$-q-1$ & $w_1$\\\hline
   $w_1$ &$-2q-1$& $w_2$ \\\hline
   $w_i, 2\le i\le q$ &$-2q-1$& $w_1$ \\\hline
   $d$ &$-q-1$ & $w_1$\\\hline
  \end{tabular}
  \caption{Summary of MR-simplified Bucklin scores of all the alternatives from the partial profile \QQ in the proof of \Cref{thm:secpl_bucklin}.}\label{tbl:scores_secpl_bucklin}
 \end{center}
 \end{table}
 
 In the other direction, let us assume that the \SECPL instance $(\AA, \PP, c, \lll)$ is a \YES instance. Let $\QQ$ be a partial profile such that $\PP\in\CC(\QQ)$ and the alternative $c$ wins uniquely in \QQ under the MR-simplified Bucklin voting rule. We can assume without loss of generality that the alternative $w_1$ is used as a competing alternative to calculate MR-simplified Bucklin score of every alternative (other than $w_1$ itself) since the simplified Bucklin score of $w_1$ is $1$ (the minimum possible) in \PP and $\PP\in\CC(\QQ)$. Let $J\subseteq[t]$ be the set of $i\in[t]$ such that the corresponding partial preferences in the group $\GG_1$ in \QQ leave the alternatives $c$ and $d$ incomparable. Since $\lll(\{d,c\})=n-\el$, we have $|J|\le\el$. We claim that $\{S_j: j\in J\}$ forms a set cover of \UU. Suppose not, then let $u_k\in\UU\setminus(\cup_{j\in J} S_j)$. We observe that for every $\RR\in\LL(\QQ)$, the simplified Bucklin score of the alternative $a_k$ is at most $q+1$. Hence the MR-simplified Bucklin score of $a_k$ in \QQ is at least $-q$. On the other hand, since the simplified Bucklin score of $c$ in \PP is $q+1$ and $\PP\in\CC(\QQ)$, the MR-simplified Bucklin score of $c$ in \QQ is at most $-q$. However, this contradicts our assumption that $c$ is the unique MR-simplified Bucklin winner of \QQ. Hence $\{S_j: j\in J\}$ forms a set cover of \UU and thus the \SCFT is a \YES instance. This concludes the proof of the theorem.
\end{proof}
}
\section{Conclusion and Future Work}

In this work, we have discovered an important vulnerability, namely manipulative elicitation, in the use of minimax regret based extension of classical voting rules in the incomplete preferential setting. Moreover, we have shown that the related computational task is polynomial time solvable for many commonly used voting rules\longversion{ including all scoring rules, maximin, Copeland$^\alpha$ for every $\alpha\in[0,1]$, simplified Bucklin voting rules, etc}. Then we have shown that by introducing a parameter per pair of alternatives which specifies the minimum number of partial preferences where this pair of alternatives must be comparable makes the computational task of manipulative elicitation \NPC for \longversion{all the above mentioned}\shortversion{many common} voting rules.\longversion{ We want to draw special attention to the fact that our approach makes manipulative elicitation \NPC even for the plurality and veto voting rules which are vulnerable to most of the other manipulative attacks. In summary, we have found an important vulnerability in the incomplete preferential setting and proposed a novel approach to tackle it.}

% \shortversion{We have found an important vulnerability, namely manipulative elicitation, in the incomplete preferential setting and proposed a novel approach to tackle it.}
A drawback of our approach is that the parameters can be non-uniform -- their values do not need to be the same for every pair of alternatives. It would be interesting to study the computational complexity of the problem when the values of the parameters are all the same. In another direction, it would be interesting to conduct extensive experimentation to study usefulness of our approach in practice. This is specially important since computational intractability is known to provide only a weak barrier in other forms of election manipulation~\cite{procaccia2007junta}.

% \begin{itemize}
%  \item every vote must reveal $k$ pairs per vote or globally
%  \item every candidate must get involved in $k$ pairs per vote or globally
%  \item restricting knowledge of manipulator
% \end{itemize}

%\newpage
\longversion{\bibliographystyle{alpha}}
\shortversion{\bibliographystyle{named}}

\bibliography{references}

\end{document}